\documentclass[letterpaper,11pt]{article}
\usepackage{verbatim}
\usepackage{xspace}
\usepackage{amsmath, amsthm ,amsfonts,graphicx}
\usepackage{algorithm}
\usepackage{algorithmicx}
\usepackage{algpseudocode}
\usepackage{multirow}
\usepackage{caption}
\usepackage{rotating}

%Page margins - 1 inch on each side.
\setlength{\textwidth}{6.5in} \setlength{\textheight}{9in}
\setlength{\oddsidemargin}{0in} \setlength{\evensidemargin}{0in}
\setlength{\hoffset}{0in} \setlength{\voffset}{0in}
\setlength{\marginparsep}{0in} \setlength{\marginparwidth}{0in}
\setlength{\topmargin}{0in} \setlength{\headheight}{0in}
\setlength{\headsep}{0in}
 \captionsetup{width=5.5in}

\newtheorem{thm}{Theorem}[section]

\newtheorem{lem}[thm]{Lemma}

\newtheorem{defn}[thm]{Definition}
\newtheorem{rem}[thm]{Remark}

\newcommand{\IC}{\ensuremath\mathrm{IC}}
\newcommand{\IR}{\ensuremath\mathrm{IR}}
\newcommand{\NPT}{\ensuremath\mathrm{NPT}}

\title{Optimal Deterministic Auctions with Correlated Priors}

\author{Christos Papadimitriou\thanks{Research (partially) supported by NSF grant CCF-0964033 and by a Google University Research Award.}\\{\small UC Berkeley}\\{\small\texttt{christos@cs.berkeley.edu}} \and\and\and George Pierrakos\footnotemark[\value{footnote}]\\{\small UC Berkeley}\\{\small\texttt{georgios@cs.berkeley.edu}}}
\date{}
\begin{document}

\maketitle
\thispagestyle{empty}

\abstract{We revisit the problem of designing the profit-maximizing single-item auction, solved by Myerson in his seminal paper for the case in which bidder valuations are independently distributed.  We focus on general joint distributions, seeking the optimal deterministic incentive compatible auction.  We give a geometric characterization of the optimal auction, resulting in a duality theorem and an efficient algorithm for finding the optimal deterministic auction in the two-bidder case and an NP-completeness result for three or more bidders.}

\section{Introduction}

Myerson's paper~\cite{RePEc:nwu:cmsems:362} on optimal auction design is remarkable in several ways.  It is not the first important paper on auctions of course~\cite{vickrey61a}, but it pioneers the point of view of its title: {\em auction design,} that is, the exploration and evaluation of a large design space in a mindset that is very much one of computer science.   It is elegant and methodologically original and powerful, completely resolving this important and difficult problem and providing insights, lemmata, and methodology that would be useful in related contexts for decades. Myerson considers the auction of an item to bidders with private valuations whose prior distributions are independent and known, and seeks incentive compatible  mechanisms that maximize auctioneer revenue.  His simple and elegant solution involves the {\em virtual valuation}, a quantity each bidder can compute from their private valuation and the valuation distribution through a mathematical maneuver called {\em ironing}.  

Myerson leaves open the case in which the valuations are correlated; in subsequent work, Cr\'{e}mer and McLean~\cite{RePEc:ecm:emetrp:v:56:y:1988:i:6:p:1247-57,RePEc:ecm:emetrp:v:53:y:1985:i:2:p:345-61} consider correlated valuations and solve the problem for the case where auctions are only required to be {\em interim} individually rational. In fact, in this framework the uncorrelated case is a singularity in the sense that in most cases (when the correlation has ``full rank'' in a certain precise sense) full surplus can be extracted in expectation through appropriate offers of lotteries to the bidders.  Despite the elegance of their result, the fact that bidders may be charged merely for participating in the auction--lottery (including losers) has been criticized as rendering the auction impractical~\cite{RePEc:ecm:emetrp:v:60:y:1992:i:2:p:395-421}, especially for settings where agents may easily cancel their participation after the auction is conducted. It is therefore of tantamount importance to consider the question of designing the optimal {\em ex post} individually rational auction for correlated valuations.  This is the question we consider, and in a sense completely answer, in the current paper.

This past decade saw the advent and rise of Algorithmic Game Theory~\cite{DBLP:conf/icalp/Papadimitriou01,DBLP:conf/stoc/NisanR99}, a research tradition which can be seen as a complexity-theoretic critique of Mathematical Economics, with Internet in the backdrop.  This point of view has yielded a host of important results and new insights, for example related to the complexity of equilibria~\cite{DBLP:conf/focs/ChenD06,DBLP:conf/stoc/DaskalakisGP06}, the trade-offs between complexity, approximation, and incentive-compatibility in social welfare-maximizing mechanism design~\cite{DBLP:conf/stoc/NisanR99,DBLP:conf/focs/PapadimitriouSS08}, and (in an extended sense that includes on-line algorithms as a part of complexity theory) the price of anarchy~\cite{DBLP:conf/stacs/KoutsoupiasP99,DBLP:journals/jacm/RoughgardenT02,DBLP:journals/jcss/Roughgarden03}.  However, there has been little progress in looking at Bayesian auctions \`a-la Myerson from this point of view.  Ronen~\cite{DBLP:conf/sigecom/Ronen01} came up with a mechanism for the correlated case that achieves half of the optimum revenue, while Ronen and Saberi~\cite{DBLP:conf/focs/RonenS02} showed that no ``ascending auction'' can do better than 7/8, and they conjectured that all relevant auctions are ascending (incidentally, we disprove this conjecture by showing that the optimal two-bidder auction may not be ascending).  Missing from these two papers, however, is a concrete sense of the ways in which this is a difficult problem. We provide this here.

There were several follow-up papers ~\cite{DBLP:conf/sigecom/ChawlaHK07,DBLP:conf/sigecom/HartlineR09,DBLP:conf/sigecom/DhangwatnotaiRY10}, examining the extent to which simple mechanisms can achieve good approximations of Myerson's optimal mechanism, and motivated by Myerson's astonishing result that the optimal mechanism for the regular i.i.d. setting is simply a second-price auction with reserve prices.  For the most part, these are positive results yielding constant approximations by simple mechanisms in a variety of settings.  Another line of work is prior-free mechanism design, where the goal is to design mechanisms that achieve profits comparable to that of some well-behaved benchmark~\cite{Goldberg02competitiveauctions}. This direction became especially interesting after Hartline and Roughgarden developed a framework in~\cite{DBLP:conf/stoc/HartlineR08} that is grounded in Bayesian optimal mechanism design allowing one to design mechanisms that simultaneously approximate all Bayesian optimal mechanisms. The intermediate approach of having bidders' valuations coming from a distribution that is nonetheless {\em unknown} to the auctioneer has also been considered~\cite{DBLP:conf/sigecom/DhangwatnotaiRY10}.  Finally the important open problem of designing optimal multi-parameter mechanisms has been addressed for some distributions of preferences, such as additive valuations with budget constraints~\cite{DBLP:conf/stoc/BhattacharyaGGM10} and unit-demand settings~\cite{DBLP:conf/stoc/ChawlaHMS10} and connections of this problem to the algorithmic problem of optimal pricing have been made~\cite{DBLP:conf/sigecom/ChawlaHK07,DBLP:conf/soda/GuruswamiHKKKM05}; economists have also made some attempts to extend Myerson's results to multi-item auctions \cite{RePEc:bla:restud:v:67:y:2000:i:3:p:455-81,RePEc:eee:jetheo:v:134:y:2007:i:1:p:494-512}.

In this paper we take a complexity-theoretic look at the general, correlated valuations case of Myerson's single-item auctions.  We point out that the optimal auction design problem can be reduced essentially to a maximum weighted independent set problem in a particular graph whose vertices are all possible tuples of valuations (an uncountable set, of course, in the continuous case).  If the distribution is discrete, this is an ordinary graph-theoretic problem; no such combinatorial characterization had been known, and this had been the main difficulty in developing an algorithmic and complexity-theoretic understanding of the problem.  For discrete distributions, this leads directly to an efficient algorithm in the case of two bidders, where the graph is bipartite, while in the case of three or more bidders NP-completeness (in fact, inapproximability) prevails.  For continuous distributions, we prove a duality characterization through a Monge-Kantorovich-like problem~\cite{Evans99partialdifferential}, and from this a fully polynomial  approximation scheme for two bidders when the distribution is continuous enough and accessible through an oracle --- plus, as an aside, a 2/3 approximation for three bidders, improving the previously best known approximation of \cite{DBLP:conf/sigecom/Ronen01}.

Our results rest on a geometric characterization of optimal deterministic auctions. An important element of our proof is the so-called {\em marginal profit contribution} function; it bears some similarities to Myerson's virtual valuation function \cite{RePEc:nwu:cmsems:362}, the most important of them being that they both admit a marginal revenue interpretation in the spirit of \cite{RePEc:ucp:jpolec:v:97:y:1989:i:5:p:1060-90}. However, despite their similarities and their somewhat common derivation, marginal profit contribution functions are different from Myerson's virtual valuations in a number of ways: they only take positive values, they are not necessarily monotone and they do not admit a natural interpretation as valuations in some modified domain.  One important ingredient of Myerson's approach to the design of optimal auctions is an analytical maneuver he calls {\em ironing;} Myerson uses ironing to transform a potentially non-monotone allocation rule into a monotone one, without hurting revenue. Our approach circumvents ironing by restricting the space of mechanisms explored; we achieve that by imposing an additional technical condition which limits the design space into a subset of all mechanisms, but one which still contains all the optimal mechanisms.

The work most related to ours is that of Dobzinski, Fu and Kleinberg~\cite{DBLP:journals/corr/abs-1011-2413}, who also study the problem of designing the optimal auction for the correlated setting.  They obtain a collection of very interesting results, which however are quite complementary to ours:  Based, among others, on insights from~\cite{DBLP:conf/sigecom/Ronen01}, they arrive at efficient algorithms for computing the optimal randomized mechanism that is truthful in expectation, and a constant factor approximation of the optimal deterministic auction for any number of bidders.  

\section{Preliminaries}\label{sec:characterization}

Imagine $n$ bidders seeking an indivisible good offered in auction. We assume that each bidder has a private valuation $v_i$ for the item and that bidders' valuations are drawn from some joint distribution over $[0,1]^n$ whose density function we denote by $\phi(\bf{v})$.  We consider both discrete and continuous $\phi$.  Discrete distributions are the source of the combinatorial insights underlying our approach, while continuous distributions provide continuity with the spirit and methodology of Myerson's paper, another important source of inspiration.   In the continuous case, we follow Myerson in making the analytically convenient assumption that $\phi(x)>0$ for all $x\in[0,1]^n$.  This is hardly a loss of generality, since a small minimum value on every point can be achieved by changing $\phi$ very little.  In stating an algorithm for the two-dimensional continuous case (Section~\ref{sec:continuous}), we shall also assume that $\phi$ is Lipschitz-continuous and accessible through an oracle (in such a way that, for example, it can be approximately integrated over nice regions). 

In the discrete case, let $Sup(\phi)$ denote the finite support of the joint discrete distribution $\phi$. Then $\phi$ is presented as a finite set of $|Sup(\phi)|$ $(n+1)$-tuples of the form $(v_1^i,v_2^i,\ldots,v_n^i,\phi^i)$, one for each point $v^i\in Sup(\phi)$, where $\phi^i$ is the probability mass concentrated at the point $(v^i)$ of the support. 

We are interested in designing auctions that maximize the auctioneer's profit. Formally, an auction consists of an allocation rule denoted by $x_i(\bf{v})$, the probability of bidder $i$ getting allocated the item, and a payment rule denoted by $p_i(\bf{v})$ which is the price paid by bidder $i$.  In this paper we focus our attention on deterministic mechanisms so that $x_i({\bf v})\in \{0,1\}$. Our goal is to maximize the auctioneer's expected profit $\sum_{i=1}^n\mathbb{E}\left[\phi({\bf v})p_i({\bf v})\right]$.

We want our auction to be ex-post incentive compatible (IC) and individually rational (IR), with no positive transfers (NPT). These notions are defined as follows:

\begin{itemize}
\item[IC:] $\forall i,v_i,v'_i,v_{-i}: v_ix_i(v_i,v_{-i})-p_i(v_i,v_{-i})\geq v_ix_i(v'_i,v_{-i})-p_i(v'_i,v_{-i})$
\item[IR:] $\forall i,v_i,v_{-i}: v_ix_i(v_i,v_{-i}) - p_i(v_i,v_{-i})\geq 0$
\item[NPT:]$\forall i,v_i,v_{-i}:  p_i(v_i,v_{-i})\geq 0$ 
\end{itemize}

We say that an allocation function $x_i(v_i,v_{-i})$ is {\em monotone} if $v_i\geq v'_i$ implies $ x_i(v_i,v_{-i})\geq x_i(v'_i,v_{-i})$ for all $i,v_{-i}$. For the case of deterministic mechanisms that we are interested in monotonicity implies that $x_i(v_i,v_{-i})$ is a step-function. The {\em threshold value} of such a step-function allocation rule is set to be the minimum winning valuation for every player given the valuations of the other bidders: $t_i(v_{-i}) = \inf\{v_i\in [0,1]\mid x_i(v_i,v_{-i})=1\}$.

The following theorem provides a characterization of mechanisms for the setting we are interested in; a proof can be found in \cite{Nisan2007}.

\begin{thm}\label{thm:characterization}
A deterministic mechanism satisfies $\IC$, $\IR$ and $\NPT$  if and only if the following conditions hold.
\begin{enumerate}
\item $x_i(v_i,v_{-i})$ is monotone.
\item  For all $i,v_i,v_{-i}$ we have
\[ p_i(v_i,v_{-i}) = \left\{ \begin{array}{ll}
         t_i(v_{-i}) & \mbox{if $x_i(v_i,v_{-i}) = 1$};\\
        0 & \mbox{if $x_i(v_i,v_{-i})= 0$}.\end{array} \right. \] 
\end{enumerate}
\end{thm}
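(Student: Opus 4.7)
The proof splits into the easy direction (that conditions 1 and 2 together imply IC, IR, and NPT) and the harder forward direction. For the easy direction, given a monotone allocation with the stated threshold payments, NPT is immediate since $t_i(v_{-i}) \geq 0$; IR holds because a winner at $v_i$ pays $t_i(v_{-i}) \leq v_i$ (by monotonicity $v_i$ is at or above the infimum of winning values) while losers pay nothing; and IC follows by case analysis on whether the true type $v_i$ and the report $v_i'$ lie on the same side of $t_i(v_{-i})$. Deviations within one side leave the outcome unchanged, while crossing the threshold yields a utility change of $\pm(v_i - t_i(v_{-i}))$, whose sign is always non-beneficial.

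For the forward direction I would establish the three required conclusions in sequence. Monotonicity comes from the standard two-way IC trick: for $v_i > v_i'$, adding the IC inequality for $v_i$ misreporting $v_i'$ to the IC inequality for $v_i'$ misreporting $v_i$ yields $(v_i - v_i')\bigl(x_i(v_i, v_{-i}) - x_i(v_i', v_{-i})\bigr) \geq 0$, which for a deterministic allocation forces $x_i$ to be nondecreasing in $v_i$. The same two-way IC argument applied to any two winning valuations gives $p_i(v_i, v_{-i}) = p_i(v_i', v_{-i})$, so all winners pay a common price $P = P(v_{-i})$. For losers, IR reads $-p_i \geq 0$ while NPT reads $p_i \geq 0$, so losers pay $0$.

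It remains to identify $P$ with $t_i(v_{-i})$. IR at any winning $v_i$ gives $P \leq v_i$, so taking the infimum over winners yields $P \leq t_i(v_{-i})$. Conversely, IC from any losing type $v_i' < t_i(v_{-i})$ contemplating a misreport into the winning region gives $0 \geq v_i' - P$, hence $P \geq v_i'$; taking the supremum over losing types yields $P \geq t_i(v_{-i})$. Combining, $P = t_i(v_{-i})$, which is precisely condition 2.

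The one delicate point, which I would handle by a careful reading of the infimum definition of $t_i(v_{-i})$, is the behavior exactly at the threshold, i.e., whether the valuation $v_i = t_i(v_{-i})$ itself wins or loses. This boundary convention does not affect the equality $P = t_i(v_{-i})$, since the two bounds on $P$ are obtained by approaching the threshold from either side; it only fixes the payment at a single valuation, which matters neither for revenue (a measure-zero event in the continuous case, and harmless either way in the discrete case, where monotonicity is still respected). I expect this boundary bookkeeping, together with properly distinguishing the discrete and continuous settings when invoking supremum/infimum arguments, to be the only subtle part of an otherwise routine derivation.
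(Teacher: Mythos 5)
Your proof is correct and is the standard single-parameter characterization argument. Note that the paper does not supply a proof of this theorem at all --- it cites Nisan et al.\ (2007) for it --- so there is no in-paper proof to compare against; your argument (easy direction by case analysis, forward direction via the two-way IC inequality to get monotonicity and payment-constancy on the winning region, IR$+$NPT to pin down losers' payments at zero, and IR/IC squeezing from both sides of the threshold to identify the common winning price with $t_i(v_{-i})$) is precisely the textbook derivation that the cited reference gives. The only small thing worth making explicit, beyond the boundary bookkeeping you already flag, is the degenerate case where the winning set for some fixed $v_{-i}$ is all of $[0,1]$: then there are no losing types to take a supremum over, but $t_i(v_{-i})=0$ and NPT directly forces $P\geq 0=t_i(v_{-i})$, closing that case.
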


\section{The Geometry of Optimal Auctions}\label{sec:geometry}
Here we focus on the two-bidder case, and provide an alternative geometric interpretation of the mechanism design problem as a space partitioning problem. Our characterization holds for any number of bidders, with the appropriate generalizations and modifications; however we only address the multi-dimensional case in  Section~\ref{sec:inapproximability}, where we will use our geometric characterization to establish the inapproximability of the problem for 3 or more bidders.

We start by noting that the allocation function can be described in terms of a partition of the unit square (the space of all possible valuation pairs) into three regions:  In region $A$ bidder 1 gets the item, in region $B$ bidder 2 gets the item and in region $C$ neither gets the item. The shape of those regions is restricted by monotonicity as follows (see Figure \ref{figure1}):  Region $A$ is {\em rightward closed,} meaning that $(x,y)\in A$ and $x'>x$ implies $(x',y)\in A$, while region $B$ is, analogously, upward closed. These regions are captured by their boundaries:  Region $A$'s boundary is a function $\alpha: [0,1]\mapsto [0,1]$ where for all $y\in[0,1]$ $\alpha(y)= \inf \{x: (x,y) \in A\}$, and similarly for region $B$ and its boundary $\beta(x)$.  

\begin{figure}[h]
\centering
\includegraphics[scale = 0.4]{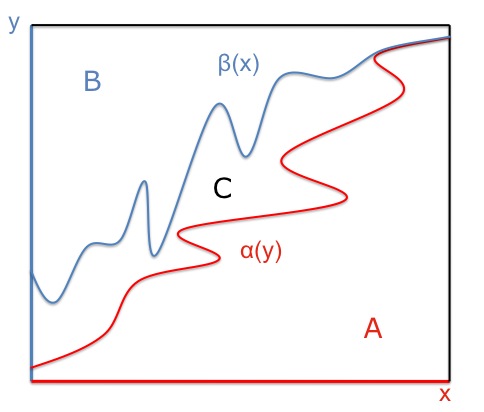}
\caption{A pair of valid allocation rules.}
\label{figure1}
\end{figure}

\begin{defn}
A {\bf valid allocation pair} $(\alpha,\beta)$ is a pair of functions from $[0,1]$ to itself satisfying the {\em non-crossing property:} for all points $(x,y)\in[0,1]^2$ we have $y\geq\beta(x)\Rightarrow x\leq\alpha(y)$.
\end{defn}

Notice that it is not necessary for the functions to be monotone; the monotonicity property of the allocation is ensured through the fact that $\alpha$ and $\beta$ are proper functions and therefore regions $A$ and $B$ are rightward and upward closed respectively. The non-crossing property ensures that for any bid pair $(x,y)$ at most one bidder is allocated the item. 

In our proof we will make extensive use of the following notion of marginal profit.

\begin{defn}\label{def:mpc}
Let $f$ (respectively, $g$) be the {\em marginal profit contribution} of a bid pair $(x,y)$ for player 1 (resp. 2) defined as:
$$f(x,y)=-\frac{\partial}{\partial x}\left[ \max_{x'\geq x}x'\cdot\int_{x'}^1\phi(t,y)\,dt\right],
g(x,y)=-\frac{\partial}{\partial y}\left[ \max_{y'\geq y}y'\cdot\int_{y'}^1\phi(x,t)\,dt\right]$$
wherever the derivative is defined, and is extended to the full range by right continuity.
\end{defn}

Intuitively, $f(x,y)\,dx\,dy$ is the added expected profit obtained from including the infinitesimal area $dx\,dy$ to $A$, that is, deciding to give the item to the first bidder if the valuations are $x,y$; in Remark~\ref{rem:discussion} we further discuss the intuition behind these functions and their relation to Myerson's virtual valuation functions.   

\begin{defn}  Call a valid allocation pair {\bf proper} if it satisfies the following condition:

\begin{equation}\label{tech-condition}
 \left.\begin{aligned}
	 \alpha(y)\cdot\int_{\alpha(y)}^1\phi(t,y)\,dt = \max_{x'\geq \alpha(y)} x'\cdot\int_{x'}^1\phi(t,y)\,dt,~~~\text{for all }y \\
	\beta(x)\cdot\int_{\beta(x)}^1\phi(x,t)\,dt  = \max_{y'\geq \beta(x)} y'\cdot\int_{y'}^1\phi(x,t)\,dt,~~~\text{for all }x 
       \end{aligned}
 \right\}
 \qquad 
\end{equation}

\end{defn}

Marginal profit contribution functions provide us with an alternative way to express the objective of expected profit.

\begin{lem}\label{lem:equiv}
Let $(\alpha,\beta)$ be a proper valid allocation pair.  Then the expected profit of any auction with payments defined as in Theorem \ref{thm:characterization}  is:
$$ \int_0^1\int_{\alpha(y)}^1f(x,y)\,dx\,dy+\int_0^1\int_{\beta(x)}^1g(x,y)\,dy\,dx$$
\end{lem}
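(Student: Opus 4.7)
The plan is to reduce the claim to two symmetric identities, one for each bidder, by using Theorem \ref{thm:characterization} to express payments in terms of thresholds, and then to extract the marginal-profit integrals via the fundamental theorem of calculus applied to the envelope function hidden inside the definition of $f$ and $g$.

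First I would invoke Theorem \ref{thm:characterization} to write the expected profit as a sum of expected payments. Bidder 1 contributes exactly when $(x,y)\in A$, i.e.\ when $x\geq\alpha(y)$, in which case the payment is the threshold $\alpha(y)$; the NPT condition zeroes out the remainder. Thus
\[
\mathbb{E}[\phi\cdot p_1]\;=\;\int_0^1\alpha(y)\!\!\int_{\alpha(y)}^1\!\!\phi(x,y)\,dx\,dy,
\]
and symmetrically for bidder 2. The whole game is then to show that, for each fixed $y$,
\[
\int_{\alpha(y)}^1 f(x,y)\,dx \;=\; \alpha(y)\!\!\int_{\alpha(y)}^1\!\!\phi(x,y)\,dx,
\]
and likewise $\int_{\beta(x)}^1 g(x,y)\,dy = \beta(x)\int_{\beta(x)}^1\phi(x,t)\,dt$. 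Integrating these over the remaining variable and summing gives the stated formula.

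To establish the identity I would fix $y$ and introduce the envelope $h_y(x)=\max_{x'\geq x}x'\int_{x'}^1\phi(t,y)\,dt$, so that $f(x,y)=-h_y'(x)$ by Definition \ref{def:mpc}. The function $h_y$ is non-increasing in $x$ with $h_y(1)=0$, so an application of the fundamental theorem of calculus gives $\int_{\alpha(y)}^1 f(x,y)\,dx = h_y(\alpha(y))-h_y(1)=h_y(\alpha(y))$. Now the first line of the properness condition \eqref{tech-condition} says precisely that $h_y(\alpha(y))=\alpha(y)\int_{\alpha(y)}^1\phi(t,y)\,dt$, closing the loop. Repeating the argument with the roles of $x$ and $y$ exchanged handles the $g$-term, and summing the two contributions yields the lemma.

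The main technical point to be careful about is the application of the FTC to $h_y$, which is defined as a running maximum from the right and only admits a derivative almost everywhere; the paper handles this by extending $f$ by right continuity. Since $h_y$ is monotone and (by continuity of $\phi$) continuous, it decomposes into flat plateaus, on which $h_y'=0$, alternating with intervals where $h_y$ coincides with $q(x')=x'\int_{x'}^1\phi(t,y)\,dt$, on which $h_y'=q'$; this piecewise structure gives absolute continuity and legitimizes the identity $h_y(\alpha(y))-h_y(1)=\int_{\alpha(y)}^1(-h_y')\,dx$. The role of the proper condition is exactly to ensure that the threshold $\alpha(y)$ lies at a point where the envelope meets the revenue curve, so that no ``phantom'' revenue from higher plateaus is counted — this is where I expect the subtlety to lie, and without properness the $f,g$-formulation would strictly overstate revenue.
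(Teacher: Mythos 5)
Your proof is correct and follows essentially the same route as the paper's: write profit via Theorem \ref{thm:characterization} as $\int_0^1 \alpha(y)\int_{\alpha(y)}^1\phi\,dx\,dy$ plus the symmetric term, then use the properness condition \eqref{tech-condition} together with the envelope/FTC structure of $f,g$ to pass to the marginal-profit integrals. The paper applies properness before the FTC step while you apply them in the opposite order, and you supply the extra justification (absolute continuity of the running-max envelope $h_y$) that the paper leaves implicit; these are presentational differences, not a different argument.
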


\begin{proof}
Let $p_1(x,y), p_2(x,y)$ be the payment functions induced by the allocation rule $(\alpha,\beta)$ according to Theorem \ref{thm:characterization}. Then the expected profit of our auction is:

\begin{eqnarray*}
& & \int_0^1\int_0^1p_1(x,y)\phi(x,y)\,dx\,dy+\int_0^1\int_0^1p_2(x,y)\phi(x,y)\,dx\,dy\\
&=&\int_0^1\left[\alpha(y)\cdot\int_{\alpha(y)}^1 \phi(t,y)\,dt\right]\,dy+\int_0^1\left[\beta(x)\cdot\int_{\beta(x)}^1 \phi(x,t)\,dt\right]\,dx \\
&=&\int_0^1\left[\max_{x'\geq \alpha(y)}x'\cdot\int_{x'}^1\phi(t,y)\,dt\right]\,dy+\int_0^1\left[\max_{y'\geq \beta(x)}y'\cdot\int_{y'}^1\phi(x,t)\,dt\right]\,dx\\
&=&\int_0^1\int_{\alpha(y)}^1f(x,y)\,dx\,dy+\int_0^1\int_{\beta(x)}^1g(x,y)\,dy\,dx
\end{eqnarray*}
where in the first equality we used the characterization of truthful payments as every player's critical value, in the second equality we made use of condition~\eqref{tech-condition} and in the last equality we made use of the definition of marginal profit contribution functions.
\end{proof}

\begin{rem}\label{rem:discussion}  There is an intuitive connection between the marginal profit contribution and Myerson's (ironed) virtual valuations~\cite{RePEc:nwu:cmsems:362}.  As pointed out in the introduction, even though these functions are not identical, the goal in both cases is the same:  to capture some notion of marginal revenue. Indeed the quantity $x\cdot\int_{x}^1\phi(t,y)\,dt$ corresponds to the expected profit of the auctioneer from an agent when she is offered a price of $x$, keeping the value of the other player fixed at $y$; equivalently one can write the expected revenue as a function $R(q)$ of the probability of sale $q=1-\int_x^1\phi(t,y)\,dt$ (for a thorough discussion of this maneuver the reader is referred to~\cite{lecnotes}). Myerson's virtual valuations then correspond to the derivative of the function $R(q)$, and ironing corresponds to taking the convex hull $\hat{R}(q)$ of $R(q)$ and then differentiating. The ``ironing'' in our case corresponds to the action of taking the derivative of $R'(q)=\max_{q'\leq q} R(q')$ instead of $R(q)$, which is intuitively the ``skyline'' one would see from $q=0$. In Figure~\ref{figure12} we show an example of a revenue curve $R(q)$ and its corresponding $\hat{R}(q)$ and $R'(q)$ curves.
 
It has been already noted by Myerson~\cite{RePEc:nwu:cmsems:362} that the allocation rule will remain invariant across ironed regions: in his case this is a {\em consequence} of his proof technique involving ironing. In our case we {\em explicitly demand} that the allocation is invariant by imposing Condition \ref{tech-condition}. Since the invariance of the allocation rule across ironed regions follows from Myerson's analysis as well, Condition \ref{tech-condition} is indeed not a loss of generality in the sense that all optimal auctions satisfy it.  We prove this formally next.  

\end{rem}

The next lemma establishes that without loss of generality we can restrict ourselves to proper allocation pairs.  Let $Profit(\alpha,\beta,\phi)$ denote the profit of an auction with allocation curves $(\alpha,\beta)$, when the joint distribution of valuations is $\phi$. 

\begin{lem}\label{lem:mpc-cont}
For any $\phi$ and for any valid allocation pair $(\alpha,\beta)$ there is a proper valid pair $(\alpha',\beta')$ such that $Profit(\alpha',\beta',\phi) \geq Profit(\alpha,\beta,\phi) $ 
\end{lem}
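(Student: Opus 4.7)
The plan is to construct $(\alpha',\beta')$ by pushing each boundary outward to the argmax specified by Condition~\eqref{tech-condition}. Concretely, for each $y$ define
$$\alpha'(y) := \inf\,\arg\max_{x'\ge \alpha(y)} x'\cdot\int_{x'}^1\phi(t,y)\,dt,$$
and analogously $\beta'(x)$ from $\beta$, taking the infimum of the argmax set so the functions are well-defined. By construction $\alpha'(y)\ge\alpha(y)$ and $\beta'(x)\ge\beta(x)$ everywhere, i.e.\ $A'\subseteq A$ and $B'\subseteq B$.

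The proof then reduces to three routine verifications. First, $(\alpha',\beta')$ is a valid allocation pair: if $y\ge\beta'(x)$, then $y\ge\beta(x)$, so by non-crossing of $(\alpha,\beta)$ we get $x\le\alpha(y)\le\alpha'(y)$. Second, $(\alpha',\beta')$ is proper. The supremum of $x'\int_{x'}^1\phi(t,y)\,dt$ over the smaller domain $\{x'\ge\alpha'(y)\}$ cannot exceed its supremum over the larger domain $\{x'\ge\alpha(y)\}$, which equals $\alpha'(y)\int_{\alpha'(y)}^1\phi(t,y)\,dt$ by choice of $\alpha'(y)$; since $\alpha'(y)$ itself lies in the smaller domain, the two maxima are equal, giving the first equation of \eqref{tech-condition}; the same argument handles $\beta'$. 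Third, profit does not drop. Using the threshold-payment formula of Theorem~\ref{thm:characterization}, the expected payment from bidder~1 under $(\alpha',\beta')$ is
$$\int_0^1 \alpha'(y)\int_{\alpha'(y)}^1\phi(t,y)\,dt\,dy \;=\; \int_0^1 \max_{x'\ge\alpha(y)} x'\int_{x'}^1\phi(t,y)\,dt\,dy \;\ge\; \int_0^1 \alpha(y)\int_{\alpha(y)}^1\phi(t,y)\,dt\,dy,$$
where the last quantity is bidder~1's contribution under $(\alpha,\beta)$; the symmetric inequality for bidder~2 finishes the job. (Note that we invoke the raw payment formula rather than Lemma~\ref{lem:equiv}, since the latter applies only to proper pairs, and $(\alpha,\beta)$ may not be proper.)

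The only real obstacle is technical: one must know the argmax is attained and that $\alpha'$ and $\beta'$ are measurable, so the integrals above make sense. Attainment is immediate in the discrete case (the function is a finite-valued step function in $x'$) and in the continuous case follows from the standing hypothesis $\phi>0$ combined with continuity, which makes $x'\mapsto x'\int_{x'}^1\phi(t,y)\,dt$ continuous on the compact interval $[\alpha(y),1]$. Choosing $\alpha'(y)$ as the infimum of the argmax set makes $\alpha'$ a lower semicontinuous function of $y$ (hence measurable) by a standard measurable-selection argument, and the same for $\beta'$. Once these technicalities are disposed of, the three steps above combine to produce the required proper $(\alpha',\beta')$ with $\textit{Profit}(\alpha',\beta',\phi)\ge \textit{Profit}(\alpha,\beta,\phi)$.
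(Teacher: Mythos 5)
Your proposal is correct and follows essentially the same route as the paper: push each boundary outward to an argmax, verify validity via $\alpha'\ge\alpha$, $\beta'\ge\beta$, verify properness by comparing maxima over nested domains, and show the profit is non-decreasing from the threshold-payment formula. The only cosmetic differences are that the paper frames it as a contradiction argument and breaks ties toward the largest argmax rather than the infimum (both tie-breaking choices work), while you add some welcome care about attainment and measurability that the paper leaves implicit.
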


\begin{proof}
For the sake of contradiction suppose this is not true, i.e. $Profit(\alpha',\beta',\phi) < Profit(\alpha,\beta,\phi)$ for any proper valid allocation pair $(\alpha',\beta')$.

We start by defining the following sets of points:
 $$\mathcal{Y} = \left\{y\in[0,1]\ \middle\vert\  \alpha(y)\cdot\int_{\alpha(y)}^1\phi(t,y)\,dt \neq \max_{x'\geq \alpha(y)} x'\cdot\int_{x'}^1\phi(t,y)\,dt\right\}$$
 $$\mathcal{X} = \left\{x\in[0,1]\ \middle\vert\ \beta(x)\cdot\int_{\beta(x)}^1\phi(x,t)\,dt  \neq \max_{y'\geq \beta(x)} y'\cdot\int_{y'}^1\phi(x,t)\,dt\right\} $$
 as the set of all coordinates $y$ (resp. $x$) where condition~\eqref{tech-condition} is violated by function $\alpha$ (resp. $\beta$). Consider now the auction defined by the following allocation curves:
 
 \[ \alpha'(y) = \left\{ \begin{array}{ll}
         \alpha(y) & \mbox{, if $y\notin \mathcal{Y}$};\\
         \arg\max_{x'} \{x'\int_{x'}^1\phi(t,y)\,dt\mid x'\geq \alpha(y)\} & \mbox{, if $y\in \mathcal{Y}$}.\end{array} \right. \] 
         
\[ \beta'(x) = \left\{ \begin{array}{ll}
         \beta(x) & \mbox{, if $x\notin \mathcal{X}$};\\
         \arg\max_{y'} \{y'\int_{y'}^1\phi(x,t)\,dt\mid y'\geq \beta(x)\} & \mbox{, if $x\in \mathcal{X}$}.\end{array} \right. \] 

where --in the case of ties-- $\arg\{\cdot\}$ returns the largest $y$ or $x$ respectively. By construction, the new pair $(\alpha',\beta')$ satisfies condition~\eqref{tech-condition}. In what follows we claim that the resulting allocation pair $(\alpha',\beta')$ is also valid and moreover it has greater revenue than $(\alpha,\beta)$, thus reaching a contradiction.

The monotonicity property of the allocation is satisfied since $\alpha'$ and $\beta'$ are proper functions of $y$ and $x$ respectively. The non-crossing property follows from the non-crossing property of $\alpha$ and $\beta$ and the fact that $\alpha'(y)\geq \alpha(y)$ for all $y\in[0,1]$ and $\beta'(x)\geq \beta(x)$ for all $x\in[0,1]$. Finally, for the profit of the two auctions defined by the allocation curves $(\alpha,\beta)$ and $(\alpha',\beta')$ we have:
\begin{eqnarray*}
&&Profit(\alpha',\beta',\phi)\\
&=&\int_0^1\left[\alpha'(y)\cdot\int_{\alpha'(y)}^1 \phi(t,y)\,dt\right]\,dy+\int_0^1\left[\beta'(x)\cdot\int_{\beta'(x)}^1 \phi(x,t)\,dt\right]\,dx \\
&\geq&\int_0^1\left[\alpha(y)\cdot\int_{\alpha(y)}^1 \phi(t,y)\,dt\right]\,dy+\int_0^1\left[\beta(x)\cdot\int_{\beta(x)}^1 \phi(x,t)\,dt\right]\,dx \\
&=& Profit(\alpha,\beta,\phi),
\end{eqnarray*}
where the inequality follows from the definition of $\alpha'$ and $\beta'$. The lemma now follows. 
\end{proof}

Denote now by $\cal AB$ the set of all proper valid allocations $(\alpha,\beta)$.
The problem of finding the optimal auction can now be restated as the following variational calculus-type problem:

\begin{defn}{\bf [Problem A]}
\begin{eqnarray*}
&\sup_{(\alpha,\beta)\in{\cal AB}}&\left\{\int_0^1\int_{\alpha(y)}^1f(x,y)\,dx\,dy+\int_0^1\int_{\beta(x)}^1g(x,y)\,dy\,dx\right\}\\
\end{eqnarray*}
\end{defn}

\begin{figure}[h]
\centering
\includegraphics[scale = 0.5]{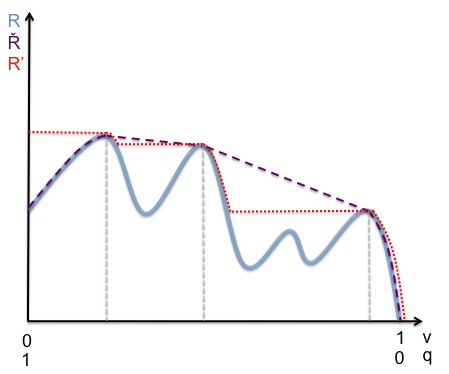}
\caption{The revenue curves $R$, $\hat{R}$ and $R'$.}
\label{figure12}
\end{figure}

\section{Two Bidders:  The Discrete Case}
In this section we present an algorithm for computing the optimal mechanism when there are two bidders with a {\em discrete} joint distribution $\phi$ with support $Sup(\phi)=[N]\times[N]$.  

The marginal profit contribution functions defined in the previous section can be appropriately modified for the discrete setting in hand:
\begin{defn}

The discrete analogues of the marginal profit contribution functions (Definition \ref{def:mpc}) for each player are defined as follows: 
$$f(i,j) =\max\left\{ i\cdot\sum_{i'\geq i}\phi(i',j) - \sum_{i'>i} f(i',j),0\right\}~~\text{for player 1}$$
$$g(i,j) =\max\left\{ j\cdot\sum_{j'\geq j}\phi(i,j') - \sum_{j'> j} g(i,j'),0\right\}~~\text{for player 2}$$
\end{defn}

As in the continuous case, we will represent the auction through a pair of functions $(\alpha,\beta)$:

\begin{defn}
A {\bf valid allocation pair} $(\alpha,\beta)$ for the discrete setting is a pair of functions from $[N]$ to itself satisfying the {\em non-crossing property:} for all points $(i,j)\in[N]\times[N]$ we have $j\geq\beta(i)\Rightarrow i<\alpha(j)$.   Such a pair partitions the set $Sup(\phi)$ into three sets $A,B$, and $C$, where $A=\{(i,j)\in Sup(\phi):  i \geq \alpha(j)\}$, $B=\{(i,j)\in Sup(\phi):  j \geq \beta(i)\}$, and $C=Sup(\phi)-A-B$.   We say that $(\alpha,\beta)$ {\bf induces} the partition $(A,B)$, where $C$ is implicit.   

Suppose that for a valid pair  $(\alpha,\beta)$ representing an optimal auction and $j\in [n]$, we have that $f(\alpha(j), j)=0$.   Then the auction represented by the same valid pair, except that $\alpha(j)$ is increased by one, is also an optimal auction (since it entails the same set of positive marginal profit contributions).  We can therefore consider, without loss of generality, only valid pairs with $f(\alpha(j),j), g(i,\beta(i))>0$.  We call such valid pairs {\em proper}.  Finally, we define $\Pi(\phi)$ to be the set of all partitions $A,B$ of $Sup(\phi)$ induced by proper valid pairs.

\end{defn}

Now, looking back at the formulation of the optimum-revenue auction problem in Problem A at the conclusion of the last section, it is immediate that obtaining the optimum-revenue auction in the discrete case is tantamount to solving a discrete optimization problem:
$$\max_{(A,B)\in\Pi(\phi)} \sum_{(i,j)\in A} f(i,j) + \sum_{(i,j)\in B} g(i,j).$$

All that remains now is to provide a useful characterization of the set $\Pi(\phi)$.  To this end, notice first that any such pair of sets $(A,B)$ has the following two additional properties, the first one inherited from incentive compatibility, and the second one following from the fact that the allocation curves form a proper valid pair:

\begin{defn}
We call a pair of disjoint subsets $(A,B)$ of $Sup(\phi)$ {\em monotone} if the following holds:
\begin{itemize}
\item If $(i,j)\in A$ and $(i',j)\in Sup(\phi)$ with $i'>i$, then $(i',j)\in A$.
\item If $(i,j)\in B$ and $(i,j')\in Sup(\phi)$ with $j'>j$, then $(i,j')\in B$.
\end{itemize}
We call such a pair {\em proper} if 
\begin{itemize}
\item If $(i,j)\in A$ and $(i',j)\notin A$ for all $i'<i$, then $f(i,j)>0$.
\item If $(i,j)\in B$ and $(i,j')\notin B$ for all $j'<j$, then $g(i,j)>0$.
That is, in proper partitions, all lower boundary points of the regions $A$ and $B$ have positive marginal profit contributions (the intuition being that otherwise, either this is not an optimal auction in the case of a negative marginal contribution, or there is another optimal auction with this property in the case of  a zero marginal contribution).
\end{itemize}

\end{defn}

Let us define now a bipartite graph $G^{\phi} = (U,W,E)$:  

\begin{itemize}
\item $U=\{u_{i,j}\mid i,j\in [N]\}$ and $W=\{w_{i,j}\mid i,j\in [N]\}$;
\item $(u_{i,j},w_{i',j'}) \in E$ if and only if $i \leq i'$ and $j \geq j'$.  In other words, there is an edge between $u$ and $w$ if and only if, informally, $(i',j')$ ``lies to the southeast'' of $(i,j)$ (see Figure~\ref{figure13}); (Notice that, in our informal sense, a point ``lies to the southeast'' of all points to its north and to its west, including the point itself.)
\item The weight of any node $u_{i,j}$ is $f(i,j)$ and the weight of any node $w_{i,j}$ is $g(i,j)$.
\end{itemize}

Intuitively, the bipartite graph captures impossibilities in constructing the optimal auction:  an edge $(u,w)$ signifies that it is not possible that both $u\in A$ and $w\in B$ (slightly abusing notation).

\begin{figure}[h]
\centering
\includegraphics[scale = 0.5]{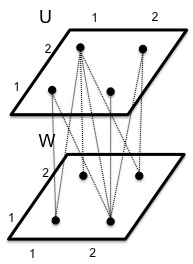}
\caption{The bipartite graph for 2 players with $N=2$.}
\label{figure13}
\end{figure}

We can now prove the sought combinatorial characterization of $\Pi(\phi)$:

\begin{lem}\label{thm:equiv-is-Pi}
Let $(A,B)$ be a pair of disjoint  subsets of $Sup(\phi)$. Then $(A,B)\in \Pi(\phi)$ if and only if $(A,B)$ is monotone and proper, and $\{u_{i,j}: (i,j)\in A\} \cup \{w_{i,j}: (i,j)\in B\}$ is an independent set of $G^{\phi}$.
\end{lem}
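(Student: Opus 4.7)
The plan is to prove both directions by direct translation between the geometric and combinatorial descriptions, leveraging the fact that the edge relation of $G^{\phi}$ exactly encodes non-crossing violations.

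For the ($\Rightarrow$) direction, take a proper valid pair $(\alpha,\beta)$ inducing $(A,B)$ and verify each condition. Monotonicity and properness of $(A,B)$ are immediate unpackings: $A=\{(i,j)\in Sup(\phi):i\geq\alpha(j)\}$ is rightward closed in each row, $B$ is upward closed in each column, and by definition of ``proper valid pair'' the marginal profit contribution is positive at each lower boundary point $(\alpha(j),j)$ and $(i,\beta(i))$. For the independent set property, argue by contradiction: if there were an edge $(u_{i,j},w_{i',j'})$ with $(i,j)\in A$ and $(i',j')\in B$, then $i\leq i'$ and $j\geq j'\geq\beta(i')$, so $j\geq\beta(i')$; the non-crossing condition then forces $i'<\alpha(j)$, whence $i\leq i'<\alpha(j)$, contradicting $(i,j)\in A\Leftrightarrow i\geq\alpha(j)$.

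For ($\Leftarrow$), assume $(A,B)$ is monotone, proper, and induces an independent set. Define
\[
\alpha(j)=\min\{i:(i,j)\in A\},\qquad \beta(i)=\min\{j:(i,j)\in B\},
\]
with the convention that an empty minimum equals $N+1$ (so nobody wins on that row or column, making non-crossing vacuous there). Monotonicity of $(A,B)$ then yields $A=\{(i,j)\in Sup(\phi):i\geq\alpha(j)\}$ and the analogous identity for $B$, so $(\alpha,\beta)$ induces $(A,B)$. To verify non-crossing, fix $i,j$ with $j\geq\beta(i)$, so $(i,\beta(i))\in B$; if we also had $i\geq\alpha(j)$ then $(\alpha(j),j)\in A$, and the vertices $u_{\alpha(j),j}$ and $w_{i,\beta(i)}$ would both lie in our set and satisfy the edge condition $\alpha(j)\leq i$, $j\geq\beta(i)$, contradicting independence. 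Properness of the valid pair is inherited directly from properness of $(A,B)$ evaluated at those same boundary points.

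The point requiring most attention is the asymmetry between the strict inequality $i<\alpha(j)$ in the discrete non-crossing condition and the weak inequalities $i\leq i'$, $j\geq j'$ in the definition of edges of $G^{\phi}$; pinning down this pairing is the entire content of the equivalence, and everything else is mechanical checking, so I do not anticipate a deeper obstacle.
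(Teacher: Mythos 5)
Your proof is correct and follows essentially the same route as the paper: both directions define $\alpha(j)=\min\{i:(i,j)\in A\}$, $\beta(i)=\min\{j:(i,j)\in B\}$ and translate edges of $G^{\phi}$ into non-crossing violations. The only difference is in detail: in the forward direction you derive the contradiction from the non-crossing property of $(\alpha,\beta)$ directly, whereas the paper instead applies monotonicity of $A$ and $B$ to place $(i',j)$ in $A\cap B$ and contradicts disjointness (these are equivalent, since both facts are consequences of $(\alpha,\beta)$ being a valid pair); and in the reverse direction you actually carry out the verification that $(\alpha,\beta)$ satisfies the non-crossing property, which the paper asserts without proof, and you handle the empty-minimum corner case explicitly, which the paper silently ignores.
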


\begin{proof}  ({\bf If.}) Since $\{u_{i,j}: (i,j)\in A\} \cup \{w_{i,j}: (i,j)\in B\}$ is an independent set of $G^{\phi}$ it follows that the sets $A$ and $B$ are disjoint. Now, since $(A,B)$ is a pair of monotone, proper and disjoint subsets of $Sup(\phi)$ the following pair of functions $(\alpha(j),\beta(i))$ is a proper valid pair, immediately implying that $(A,B)\in \Pi(\phi)$: $\alpha(j) = \min\{i\mid(i,j)\in A\}$ and $\beta(i) = \min\{j\mid(i,j)\in B\}$.

({\bf Only if.}) If $(A,B)\in \Pi(\phi)$ then by the definition of $\Pi(\phi)$ the sets $(A,B)$ have to be monotone and proper and they also need to form a partition, i.e. be disjoint. To show that $\{u_{i,j}: (i,j)\in A\} \cup \{w_{i,j}: (i,j)\in B\}$ is an independent set of $G^{\phi}$ assume towards contradiction that there are nodes $u_{i,j}$ and $w_{i',j'}$ such that $(i,j)\in A$ and $(i',j')\in B$, with an edge between $u_{i,j}$ and $w_{i',j'}$; from  the construction of $G^\phi$, it follows that $i \leq i'$ and $j \geq j'$. Since $A$ and $B$ are both monotone, it follows that $(i',j)\in A\cap B$, contradicting the disjointness of $A$ and $B$. 
\end{proof}

The next Lemma shows that the additional assumptions that $(A,B)$ is a proper and monotone pair of subsets are not necessary, if one restricts attention to the optimal solution (i.e. the solution of maximum weight).

\begin{lem}\label{thm:monotone-proper-follows}
Let $(A,B)$ be a pair of subsets of $Sup(\phi)$ such that the set $\{u_{i,j}: (i,j)\in A\} \cup \{w_{i,j}: (i,j)\in B\}$ is a maximum weight independent set of $G^{\phi}$, of minimum cardinality among all independent sets of the same weight. Then $(A,B)$ is monotone and proper, i.e. $(A,B)\in\Pi(\phi)$.
\end{lem}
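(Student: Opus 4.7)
The plan is to argue by contradiction, using the minimum cardinality condition as the main lever for both properness and monotonicity.

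\textbf{Properness first.} Suppose some $(i,j)\in A$ has $f(i,j)=0$. Then removing $u_{i,j}$ from the independent set preserves independence (removing a vertex always does) and preserves the total weight (since $f(i,j)=0$), but strictly decreases cardinality. This contradicts the minimum cardinality hypothesis. Hence every $(i,j)\in A$ satisfies $f(i,j)>0$; the symmetric argument gives $g(i,j)>0$ for every $(i,j)\in B$. Since properness only asks the lower boundary of each region to have strictly positive marginal contribution, it follows a fortiori.

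\textbf{Monotonicity of $A$ (the argument for $B$ is symmetric).} Fix $(i,j)\in A$ and $(i',j)\in\mathrm{Sup}(\phi)$ with $i'>i$; the goal is to show $(i',j)\in A$. First, if $(i',j)\in B$ then both $u_{i,j}$ and $w_{i',j}$ lie in the IS, but $i\le i'$ and $j\ge j$ so $(u_{i,j},w_{i',j})$ is an edge of $G^\phi$, contradicting independence. So $(i',j)\notin B$. Assume for contradiction $(i',j)\notin A\cup B$ and try to augment the IS with $u_{i',j}$. The neighborhood of $u_{i',j}$ in $G^\phi$ is $\{w_{k,l}:k\ge i',\,l\le j\}$, which (since $i\le i'$) is a subset of the neighborhood of $u_{i,j}$. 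Because $u_{i,j}$ is in the IS, none of these $w$'s is, so $u_{i',j}$ can be added while preserving independence. If $f(i',j)>0$, the new IS has strictly greater weight, contradicting maximum weight.

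\textbf{The main obstacle} is the remaining sub-case $f(i',j)=0$: the augmentation produces an MWIS of the \emph{same} weight but \emph{larger} cardinality, which does not directly contradict minimum cardinality. To resolve it, the plan is to exploit the recursive structure of $f$. Since $f(i,j)>0$, the peak condition $F_j(i)>\max_{k>i}F_j(k)$ holds with $F_j(k):=k\sum_{l\ge k}\phi(l,j)$; combined with $f(i',j)=0$, which forces $F_j(i')\le\max_{k>i'}F_j(k)$, there must exist some $k^*>i'$ with $f(k^*,j)>0$, and by the same augmentation argument $(k^*,j)\in A$. Thus column $j$ of $A$ contains two ``peaks'' straddling the ``valley'' $(i',j)$. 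The idea is now to perform a combined exchange: swap $u_{i,j}$ (of weight $f(i,j)$) for the $w$-node at $(i,j)$ (or at $(i',j)$), whose positive weight is guaranteed by the full-support assumption on $\phi$ and the structural relationship between $f$ and $g$ forced by the peak/valley pattern, and simultaneously insert $u_{i',j}$ now that the offending neighborhood is freed. A careful bookkeeping of edge constraints, together with the properness established in Step 1, yields an independent set of weight at least as large but strictly smaller cardinality, delivering the required contradiction. I expect this last exchange step---showing that the peak/valley structure, under full support, always admits a cardinality-reducing rewrite---to be the technical crux of the proof.
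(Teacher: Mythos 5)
Your Step~1 (properness) is correct and in fact more decisive than the paper's own argument: a vertex of zero weight can always be deleted from an independent set, preserving independence and total weight while strictly decreasing cardinality, so the minimum-cardinality maximum-weight independent set contains no zero-weight vertex whatsoever, and properness follows immediately. (The paper's proof hedges this with an ``unless'' clause that, under your cleaner reasoning, is never actually triggered.)

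Your diagnosis of the obstacle in Step~2 is also sharper than the paper's, which simply asserts that monotonicity ``follows from the fact that the set \dots is a maximum weight independent set, and all weights are non-negative.'' That assertion handles only the sub-case $f(i',j)>0$; when $f(i',j)=0$, adding $u_{i',j}$ produces an \emph{equal}-weight but \emph{larger} independent set, so neither maximality nor minimum cardinality forces $(i',j)\in A$, exactly as you observe. Where your plan goes wrong is in trying to manufacture a contradiction by an exchange. The running-max structure of $f$ along a column genuinely permits a pattern $f(i,j)>0$, $f(i',j)=0$, $f(i'',j)>0$ with $i<i'<i''$, and when both $u_{i,j}$ and $u_{i'',j}$ are selected the minimum-cardinality condition \emph{actively excludes} the zero-weight $u_{i',j}$; your proposed swap removes a positive-weight vertex $u_{i,j}$ in the hope that a full-support bound on $g$ compensates, but no such bound holds in general, and there is simply no independent set of the same weight and strictly smaller cardinality to find. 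The way through, which neither your sketch nor the paper's proof spells out, is not to prove that the minimum-cardinality MWIS is monotone, but to \emph{fill} each such hole: since the neighborhood of $u_{i',j}$ is contained in that of $u_{i,j}$, adding $u_{i',j}$ preserves independence, and since $f(i',j)=0$ it preserves weight; iterating gives a monotone, proper independent set of the same maximum weight, hence an element of $\Pi(\phi)$, which is what the subsequent Theorem actually needs. So: Step~1 is an improvement, Step~2 correctly locates the real gap (and the paper's proof has the same gap, silently), but the ``combined exchange'' is the wrong tool, and the clean resolution is the weight-neutral hole-filling just described.
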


\begin{proof}
It suffices to show that the set $(A,B)$ is monotone and proper, and the lemma follows from Lemma~\ref{thm:equiv-is-Pi}. Indeed, the monotonicity of  $(A,B)$ follows from the fact that the set $\{u_{i,j}: (i,j)\in A\} \cup \{w_{i,j}: (i,j)\in B\}$ is a maximum weight independent set, and all weights are non-negative. Moreover, since the independent set has minimum cardinality among all independent sets of the same weight, it follows that it does not contain any node $u_{i,j}$ of zero weight, i.e. corresponding to some valuation $(i,j)$ such that $f(i,j) = 0$, unless it also includes a node $u_{i',j}$ corresponding to some valuation $(i',j)$ with $f(i',j) > 0$ for some $i'<i$ (and analogously for $w_{i,j}$). Hence, by definition, $(A,B)$ is proper as well.
\end{proof}

\begin{thm}
Given a discrete joint valuation distribution $\phi$ for two bidders, the optimal ex-post IC and IR deterministic auction can be computed in time $O(|Sup(\phi)|^3)$.
\end{thm}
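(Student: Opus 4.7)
The plan is to reduce the optimal auction problem to a maximum weight independent set (MWIS) problem on the bipartite graph $G^\phi$, which can be solved in polynomial time via min-cut. Combining the discrete restatement of Problem~A with Lemmas~\ref{thm:equiv-is-Pi} and~\ref{thm:monotone-proper-follows}, an optimal auction corresponds to an MWIS of $G^\phi$ (of minimum cardinality among MWIS's of the same weight, so that the induced pair $(A,B)$ lies in $\Pi(\phi)$); Theorem~\ref{thm:characterization} then determines the payment rule from the threshold functions.

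Algorithmically I would proceed in three phases. First, compute $f(i,j)$ and $g(i,j)$ for every $(i,j)\in Sup(\phi)$: the recursive formulae can be evaluated column-by-column in decreasing order of $i$ (for $f$) and row-by-row in decreasing order of $j$ (for $g$), yielding $O(|Sup(\phi)|^2)$ time overall. Second, instantiate $G^\phi$, which has $2|Sup(\phi)|$ nodes and $O(|Sup(\phi)|^2)$ edges. Third, solve the bipartite MWIS via the textbook min-cut reduction: build an auxiliary flow network with a source $s$ sending capacity-$f(i,j)$ arcs to each $u_{i,j}$, a sink $t$ receiving capacity-$g(i,j)$ arcs from each $w_{i,j}$, and each $G^\phi$-edge oriented $u_{i,j}\to w_{i',j'}$ with capacity $+\infty$. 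Any minimum $s$-$t$ cut has finite value; its saturated source/sink arcs identify a minimum weight vertex cover of $G^\phi$, and its complement is the desired MWIS. Running push-relabel on this $O(|Sup(\phi)|)$-node network costs $O(|Sup(\phi)|^3)$, which dominates the overall running time.

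The main subtlety --- and the step where I expect the real work --- is the minimum-cardinality stipulation of Lemma~\ref{thm:monotone-proper-follows}, needed to guarantee that the pair we extract is actually proper and monotone. I would handle it with a short post-processing step: delete from the returned MWIS $S$ every zero-weight vertex (each such removal preserves independence and leaves the total weight unchanged), then define the threshold curves $\alpha(j)=\min\{i:u_{i,j}\in S\}$ and $\beta(i)=\min\{j:w_{i,j}\in S\}$ from the pruned set. The key sanity check, mirroring the argument in the proof of Lemma~\ref{thm:monotone-proper-follows}, is that every positive-weight point $(i,j)$ with $i\geq\alpha(j)$ already lies in $S$: any candidate conflict $w_{i'',j''}$ with $i\leq i''$ and $j\geq j''$ would also conflict with $u_{\alpha(j),j}\in S$, so $u_{i,j}$ could otherwise be added to strictly increase the MWIS weight, contradicting optimality. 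Hence the auction built from $(\alpha,\beta)$ together with the payments of Theorem~\ref{thm:characterization} collects exactly the MWIS weight, i.e.\ the optimum revenue, within the claimed $O(|Sup(\phi)|^3)$ bound.
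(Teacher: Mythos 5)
Your proposal is correct and follows the paper's approach: reduce via Lemmas~\ref{thm:equiv-is-Pi} and~\ref{thm:monotone-proper-follows} to bipartite maximum-weight independent set on $G^\phi$, and solve it with a flow computation in $O(|Sup(\phi)|^3)$. The paper's proof is essentially a one-liner citing those lemmas and a flow algorithm; your added detail --- computing $f,g$ in $O(|Sup(\phi)|^2)$, the min-cut formulation, and especially the post-processing where you prune zero-weight vertices and then take the rightward/upward closures via the threshold curves $\alpha,\beta$ rather than reading $(A,B)$ off the independent set directly --- is actually more careful than the paper's bare appeal to the min-cardinality clause of Lemma~\ref{thm:monotone-proper-follows}, since a minimum-cardinality MWIS contains no zero-weight vertices at all and hence as a bare point set need not be monotone; your closure step and the accompanying non-crossing check close that small gap cleanly.
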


\begin{proof}
It follows from Lemma~\ref{thm:monotone-proper-follows} that computing the optimal auction for two bidders with a joint valuation distribution $\phi$ reduces to computing a maximum weight independent set on the induced bipartite graph $G^{\phi}$. In particular the optimal allocation rule corresponds to a partition $(A,B)$ such that $\{u_{i,j}: (i,j)\in A\} \cup \{w_{i,j}: (i,j)\in B\}$ is a maximum weight independent set of $G^{\phi}$, with minimum cardinality among all independent sets of the same weight. Finding the maximum weight independent set by running a min-cost-flow algorithm yields the desired running time.
\end{proof}

\section{Two Bidders:  The General Case}\label{sec:continuous}
In this section we return to the continuous two-bidder problem of Section 3. Our main result is an efficient algorithm that approximates the optimum solution of Problem A within an additive $\epsilon$.   Our main tool is a duality theorem, generalizing the duality between the maximum-weight  independent set problem in a bipartite graph and a minimum-cost flow in an associate network.  In particular, we show that the maximization Problem A defined in Section 3 is equivalent to a certain mass-moving minimization problem (reminiscent in some aspects of the classic Monge-Kantorovich \cite{Evans99partialdifferential} mass-transfer problem), namely the following:

\begin{defn}{\bf [Problem B]}
\begin{eqnarray*}
&\inf_\gamma&\left\{\int_0^1\int_0^1\int_0^{y_1}\int_{x_1}^1\gamma(x_1,y_1,x_2,y_2)\,dx_2\, dy_2\,dx_1\,dy_1\right\}\\
&\mathrm{s.t.}& \int_0^{y_1}\int_{x_1}^1\gamma(x_1,y_1,x_2,y_2)\,dx_2\, dy_2 \geq f(x_1,y_1),\,\forall(x_1,y_1)\in[0,1]^2\\
&& \int_{y_2}^1\int_0^{x_2}\gamma(x_1,y_1,x_2,y_2)\,dx_1\, dy_1 \geq g(x_2,y_2),\,\forall(x_2,y_2)\in[0,1]^2\\
&& \gamma(x_1,y_1,x_2,y_2)\geq 0,\,\forall(x_1,y_1),(x_2,y_2)\in[0,1]^2\\
\end{eqnarray*}
\end{defn}

Let us denote by $\Gamma$ the set of all functions $\gamma:[0,1]^4\mapsto \Re$ satisfying the above constraints.  

One intuitive interpretation of Problem B is this:  We are given two landscapes in the unit square, captured by two functions $f,g:[0,1]^2\mapsto \Re$.  We are seeking a plan for transforming landscape $f$ to landscape $g$, where the following operations are allowed:
\begin{itemize}
\item take material away from any point $(x,y)$;
\item add material to any point $(x,y)$;
\item transfer material from any point to any other point {\em in the southeast direction} (if some material is not moved, we think of it as having moved in the southeast direction zero distance).
\end{itemize}
We want the plan in which the total amount of material moved (irrespective of distance moved, here is where this problem differs significantly from Monge-Kantorovich) is minimized.  We next show that this problem coincides, at optimality, with the optimal auction:

\begin{thm}\label{thm-main}{\bf [Duality Theorem]\\}
For any joint density function $\phi$ on $[0,1]^2$:
\begin{eqnarray*}
&&\sup_{(\alpha,\beta)\in {\cal AB}}\left\{\int_0^1\int_{\alpha(y)}^1f(x,y)\,dx\,dy+\int_0^1\int_{\beta(x)}^1g(x,y)\,dy\,dx\right\}\\
& =& \inf_{\gamma\in \Gamma}\left\{\int_0^1\int_0^1\int_0^{y_1}\int_{x_1}^1\gamma(x_1,y_1,x_2,y_2)\,dx_2\, dy_2\,dx_1\,dy_1\right\} 
\end{eqnarray*}
\end{thm}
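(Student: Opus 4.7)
The plan is to prove the two inequalities separately: weak duality ($\sup \leq \inf$) by directly pairing the constraints of Problem B against any valid allocation, and strong duality ($\sup \geq \inf$) via discretization together with the bipartite LP duality used in Section 4.

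For weak duality, I would fix a proper valid allocation pair $(\alpha,\beta)$ inducing regions $A,B$ and any feasible $\gamma\in\Gamma$. Integrating the first family of constraints over $(x_1,y_1)\in A$ and the second family over $(x_2,y_2)\in B$ and adding, the left-hand side becomes $\int_A f + \int_B g$, exactly the Problem-A objective, and the right-hand side is
$$
\int\!\int\!\int\!\int \gamma(x_1,y_1,x_2,y_2)\bigl(\mathbf{1}_{(x_1,y_1)\in A} + \mathbf{1}_{(x_2,y_2)\in B}\bigr)\, dx_2\, dy_2\, dx_1\, dy_1,
$$
integrated over the southeast region $\{x_2\geq x_1,\ y_2\leq y_1\}$. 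The key observation is that on this region the bracket is pointwise at most $1$: if both $(x_1,y_1)\in A$ and $(x_2,y_2)\in B$ while $x_2\geq x_1$ and $y_2\leq y_1$, then rightward-closedness of $A$ would force $(x_2,y_1)\in A$ and upward-closedness of $B$ would force $(x_2,y_1)\in B$, contradicting non-crossing (modulo a measure-zero set of boundary points). Hence $\int_A f + \int_B g \leq \int \gamma$, and taking $\sup$ on the left and $\inf$ on the right closes this direction.

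For strong duality I would discretize. Partition $[0,1]^2$ into an $N\times N$ grid and replace $\phi$, $f$, $g$ by their cell averages. The resulting discrete instance falls within Section 4, so its optimal Problem-A value equals the maximum weight of an independent set in the associated bipartite graph $G^\phi$; since $G^\phi$ is bipartite, its LP relaxation is tight, and its LP dual is exactly the grid discretization of Problem B (a fractional vertex cover, equivalently a min-cost flow). Discrete strong duality is therefore free at every level $N$. It then suffices to let $N\to\infty$: a grid-aligned staircase approximation of any continuous valid $(\alpha,\beta)$ shows the discrete sup converges upward to the continuous sup, while piecewise-constant interpolation of a near-optimal discrete $\gamma^{\ast}$ yields a continuous feasible $\gamma\in\Gamma$ whose cost approximates the discrete infimum. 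Lipschitz-continuity and strict positivity of $\phi$ control the rounding error at rate $O(1/N)$ in both directions, and squeezing yields equality of the two optima.

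The main obstacle is precisely the limiting step in strong duality; weak duality is essentially bookkeeping plus non-crossing. The delicate point is lifting a discrete optimal $\gamma^{\ast}$ back to a continuous feasible $\gamma$ without inflating any of the uncountably many pointwise constraints: one must spread the discrete mass uniformly within each cell and then use Lipschitz control on $f$ and $g$ to guarantee that the continuous constraints are met after rounding, while simultaneously arguing that the discrete grid optimum $\int_{\text{grid}} f + \int_{\text{grid}} g$ converges to the continuous Problem-A optimum. An alternative would be an infinite-dimensional LP duality theorem on the cone of nonnegative $L^1$ functions, but verifying the constraint qualification there appears no simpler than the direct limit argument.
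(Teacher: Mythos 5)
Your high-level plan --- direct weak duality via non-crossing, then strong duality by discretizing to a bipartite LP and letting the grid size go to zero --- is the paper's proof strategy (Section~\ref{proof_main}), with your discrete max/min pair being exactly Problems C and D. Your weak-duality argument (integrate the $f$-constraint against $\mathbf{1}_A$ and the $g$-constraint against $\mathbf{1}_B$, then observe that $\mathbf{1}_A(x_1,y_1)+\mathbf{1}_B(x_2,y_2)\le 1$ on the southeast cone by non-crossing together with rightward/upward closedness) is a compact restatement of Lemma~\ref{lem:AB}, and your lifting of a discrete $\gamma^\ast$ to a feasible continuous $\gamma$ by spreading mass uniformly over cells is Lemma~\ref{lem:DB}.

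The gap is in the sup-side limit, and it is a direction error. To close the chain you must show $OPT(A)\ge OPT(C)-O(\epsilon)$, i.e.\ convert a discrete optimal independent set into a nearly-as-good continuous proper valid allocation pair. What you describe --- ``a grid-aligned staircase approximation of any continuous valid $(\alpha,\beta)$'' --- establishes the opposite inequality $OPT(C)\ge OPT(A)-O(\epsilon)$, which the argument never uses; combined with weak duality and Lemma~\ref{lem:DB} it does not force equality. The direction you actually need is Lemma~\ref{lem:AC}, and it contains two technicalities your sketch skips. First, the Section~\ref{proof_main} graph has edges only between \emph{strictly} southeast cells, so an optimal independent set can simultaneously select the player-1 and player-2 copy of the same cell; the rounded staircase must break this tie and forfeit one contribution, and one must argue that only $O(1/\epsilon)$ boundary cells, each of weight $O(\epsilon^2)$, are affected, giving total loss $O(\epsilon)$. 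Second, the rounded staircase need not satisfy the properness condition~\eqref{tech-condition}; you must invoke the repair step of Lemma~\ref{lem:mpc-cont}, since otherwise Lemma~\ref{lem:equiv} does not apply and the marginal-profit integral you write down is not the auction's revenue. Without these two steps the strong-duality half does not close.
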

 
\subsection{Proof of the Duality Theorem}\label{proof_main}
\paragraph{General Plan.}  The proof of the theorem is by discretizing the unit square into domains of small size, proving a duality result for the discrete version, establishing upper bounds for the discretization error, and taking the limit for finer and finer discretization. In the course of the proof we will introduce a number of auxiliary problems.

\paragraph{Discretization.} We start by discretizing the continuous functions $f$ and $g$ defined on $[0,1]\times[0,1]$ by two discrete functions $f_d$ and $g_d$ defined on the $[n] \times [n]$ grid, where $n$ is an integer greater than one and [n]=\{0,1,\ldots,n-1\}; we denote $1\over n$ by $\epsilon$.  We subdivide the $[0,1]\times[0,1]$ square into $\epsilon\times\epsilon$ little squares; we are mapping a little square with southwest coordinate $(x,y)$ to the grid point $n\cdot(x,y)$. The discrete functions are now obtained by assigning to each point in the grid the aggregate mass of its corresponding square on the plane.

$$f_d(i,j) = \int_{\epsilon i}^{\epsilon (i+1)}\int_{\epsilon j}^{\epsilon (j+1)} f(x,y)\,dy\,dx,\,\text{ for }i,j=0,\ldots,n-1$$
and
$$g_d(i,j) = \int_{\epsilon i}^{\epsilon (i+1)}\int_{\epsilon j}^{\epsilon (j+1)} g(x,y)\,dy\,dx\,\text{ for }i,j=0,\ldots,n-1.$$

\begin{figure}[h]
\centering
\includegraphics[scale = 0.5]{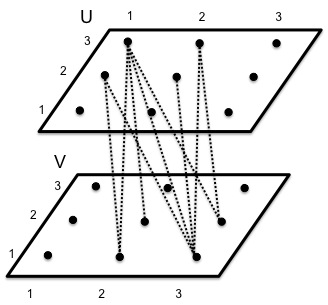}
\caption{The graph consisting of two grids, one for each player, for $n=3$.}
\label{figure2}
\end{figure}

\paragraph{The Graph.}  We next create a weighted bipartite graph $G=(U,V,E)$ as follows:  $U=V=[n]\times [n]$.  We use $u(i,j)$ and $v(i,j)$ to denote the vertices of $U$ and $V$ respectively, and sometimes use the shorthand $u$ and $v$ to refer to nodes of each grid respectively. Vertex $u(i,j)$ of $U$ has a weight equal to $w_u=f_d(i,j)$, and similarly vertex $v(i',j')$ of $V$ has a weight $w_v=g_d(i',j')$.   A pair $(u(i,j),v(i',j'))$ is in $E$ if and only if $i< i'$ and $j > j'$, that is, if grid point $v$ is in the (strictly) Southeast direction from grid point $u$ (see Figure \ref{figure2} for an example with $n=3$).

Consider now the following problem, familiar from the previous section:

\begin{defn}{\bf [Problem C: {\sc Maximum Weight Independent Set}]}\\
Given the weighted bipartite graph $G=(U,V,E)$ above,
\begin{eqnarray*}
&\max_{\{x_u\in\{0,1\},x_v\in\{0,1\}\}}&\sum_{u\in U,v\in V} x_uw_u+x_vw_v\\
&\mathrm{s.t.}&x_u+x_v \leq 1,\,\forall (u,v)\in E\\
\end{eqnarray*}
\end{defn}

The dual of the above problem is the following:
\begin{defn}{\bf [Problem D: {\sc Minimum Cost Transshipment}]}\\
Given the weighted bipartite graph $G=(U,V,E)$ above,
\begin{eqnarray*}
&\min_{\{y_{uv}\in\Re\}}&\sum_{(u,v)\in E} y_{uv}\\
&\mathrm{s.t.}&\sum_{v:(u,v)\in E} y_{uv}\geq w_u,\,\forall u\in U\\
&&\sum_{u:(u,v)\in E} y_{uv}\geq w_v,\,\forall v\in V\\
&&y_{uv}\geq 0,\,\forall (u,v)\in E\\
\end{eqnarray*}
\end{defn}

\paragraph{The Inequalities.} The crux of the proof is a sequence of results relating the various solutions and optimum solutions of these four problems.  In what follows we use $SOL(\cdot)$ to denote the cost of any feasible solution of a problem among (A), (B), (C), and (D) defined above, and $OPT(\cdot)$ to denote the cost of the optimum solution of a problem (sometimes $SOL$ and $OPT$ also denotes the actual solutions). The first such inequality establishes a form of weak duality between Problems $A$ and $B$, while the next two show that the discretization error is small.

\begin{lem}\label{lem:AB}
For any feasible solutions of $A$ and $B$ we have: $SOL(A)\leq SOL(B)$.
\end{lem}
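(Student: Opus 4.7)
The plan is to chain together the two pointwise inequalities from the constraints of Problem B and then argue that the resulting aggregate integral does not double-count the mass of $\gamma$.

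First, I would apply the constraint $f(x_1,y_1) \leq \int_0^{y_1}\!\int_{x_1}^1 \gamma(x_1,y_1,x_2,y_2)\,dx_2\,dy_2$ at each $(x_1,y_1) \in A$ and integrate over $A$, obtaining
\[
\int_0^1\!\int_{\alpha(y_1)}^1 f(x_1,y_1)\,dx_1\,dy_1 \;\leq\; \int_{[0,1]^4} \mathbf{1}_A(x_1,y_1)\,\mathbf{1}_{SE}(x_1,y_1,x_2,y_2)\,\gamma\;d\mathbf{x}\,d\mathbf{y},
\]
where $\mathbf{1}_{SE}$ is the indicator of the ``southeast region'' $\{x_1\leq x_2,\,y_2\leq y_1\}$. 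Symmetrically, applying the second constraint at each $(x_2,y_2)\in B$ yields
\[
\int_0^1\!\int_{\beta(x_2)}^1 g(x_2,y_2)\,dy_2\,dx_2 \;\leq\; \int_{[0,1]^4} \mathbf{1}_B(x_2,y_2)\,\mathbf{1}_{SE}(x_1,y_1,x_2,y_2)\,\gamma\;d\mathbf{x}\,d\mathbf{y}.
\]
Summing, the right-hand side becomes $\int_{[0,1]^4}\bigl(\mathbf{1}_A(x_1,y_1)+\mathbf{1}_B(x_2,y_2)\bigr)\mathbf{1}_{SE}\,\gamma$, and since $\gamma\geq 0$ the result will follow if I can show that $\mathbf{1}_A(x_1,y_1)+\mathbf{1}_B(x_2,y_2) \leq 1$ almost everywhere on $\mathbf{1}_{SE}=1$.

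The hard part is this \emph{no double-counting} claim. Suppose both $(x_1,y_1)\in A$ and $(x_2,y_2)\in B$ with $x_1\leq x_2$ and $y_2\leq y_1$. Then rightward-closure of $A$ forces $(x_2,y_1)\in A$, i.e.\ $x_2\geq \alpha(y_1)$; upward-closure of $B$ forces $(x_2,y_1)\in B$, i.e.\ $y_1\geq \beta(x_2)$, and the non-crossing property then gives $x_2\leq \alpha(y_1)$. Hence $x_2=\alpha(y_1)$, which is a 3-dimensional locus inside the 4-dimensional cube $[0,1]^4$ (the parameters $y_1,x_1,y_2$ remain free) and therefore has Lebesgue measure zero. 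Since $\gamma$ is measurable and non-negative, its integral over this locus vanishes, so the indicator sum is indeed bounded by $1$ almost everywhere on the SE region.

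Combining everything,
\[
SOL(A) \;=\; \int_A f + \int_B g \;\leq\; \int_{[0,1]^4}\mathbf{1}_{SE}\,\gamma\;d\mathbf{x}\,d\mathbf{y} \;=\; SOL(B),
\]
which is the desired weak-duality inequality. The only delicate step is the disjointness-up-to-measure-zero argument in the middle paragraph; everything else is a direct use of Fubini's theorem together with the pointwise constraints defining $\Gamma$.
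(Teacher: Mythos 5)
Your proof is correct and essentially coincides with the paper's: both apply the Problem-B constraints pointwise, then use rightward/upward closure together with the non-crossing property to show the lifted $A$- and $B$-domains cannot simultaneously charge $\gamma$ on the southeast region. The only cosmetic difference is in the final step, where the paper shows the lifted $B$-domain is contained in $\{x_1 \leq \alpha(y_1)\}\cap SE$ and pairs it with the $A$-domain $\{x_1 \geq \alpha(y_1)\}\cap SE$, rather than appealing, as you do, to the fact that the graph $\{x_2=\alpha(y_1)\}$ has Lebesgue measure zero in $[0,1]^4$.
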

\begin{proof}
\begin{eqnarray*}
&&SOL(A)\\
&=&\int_0^1\int_{\alpha(y_1)}^1f(x_1,y_1)\,dx_1\,dy_1+\int_0^1\int_{\beta(x_2)}^1g(x_2,y_2)\,dy_2\,dx_2\\
&\leq&\int_0^1\int_{\alpha(y_1)}^1\int_0^{y_1}\int_{x_1}^1\gamma(x_1,y_1,x_2,y_2)\,dx_2\,dy_2\,dx_1\,dy_1+\int_0^1\int_{\beta(x_2)}^1\int_{y_2}^1\int_0^{x_2}\gamma(x_1,y_1,x_2,y_2)\,dx_1\,dy_1\,dy_2\,dx_2
\end{eqnarray*}
where we used the inequality constraints of Problem B to upper bound the values of $f(x,y)$ and $g(x,y)$. We next notice that:
$$\int_0^1\int_{\beta(x_2)}^1\int_{y_2}^1\int_0^{x_2}\gamma(x_1,y_1,x_2,y_2)\,dx_1\,dy_1\,dy_2\,dx_2\leq\int_0^1\int_0^{\alpha(y_1)}\int_0^{y_1}\int_{x_1}^1\gamma(x_1,y_1,x_2,y_2)\,dx_2\,dy_2\,dx_1\,dy_1$$
This inequality follows from the non-negativity of $\gamma$ and the fact that the $(x_1,y_1,x_2,y_2)$ included in the integral of the LHS are the following set:
\begin{eqnarray*}
&&\{(x_1,y_1,x_2,y_2)\in[0,1]^4\mid y_2\geq \beta(x_2),\ x_1\leq x_2,\ y_1\geq y_2\}\\
&=&\{(x_1,y_1,x_2,y_2)\in[0,1]^4\mid y_2\geq \beta(x_2),\ x_1\leq x_2,\ y_1\geq y_2,\ x_1\leq \alpha(y_1)\}\\
&\subseteq&\{(x_1,y_1,x_2,y_2)\in[0,1]^4\mid x_2\leq \alpha(y_2),\ x_1\leq x_2,\ y_1\geq y_2,\ x_1\leq \alpha(y_1)\}\\
&\subseteq&\{(x_1,y_1,x_2,y_2)\in[0,1]^4\mid x_1\leq x_2,\ y_1\geq y_2,\ x_1\leq \alpha(y_1)\}
\end{eqnarray*}
which is exactly the set of $(x_1,y_1,x_2,y_2)$ included in the integral of the RHS. The first equality above follows from the fact that the inequality $x_1\leq \alpha(y_1)$ follows from the inequalities $\{y_2\geq \beta(x_2),\ x_1\leq x_2,\ y_1\geq y_2\}$, because we have $y_1\geq \beta(x_2)$ so the non-crossing property implies that $x_2\leq \alpha(y_1)$ and therefore $x_1\leq\alpha(y_1)$. The first set inclusion follows from the fact that $y_2\geq\beta(x_2)\Rightarrow x_2\leq\alpha(y_2)$ from the non-crossing property, while the last inclusion is trivial.

We have therefore concluded that the cost of any feasible solution of $A$ is upper bounded by:
\begin{eqnarray*}
&&\int_0^1\int_{\alpha(y_1)}^1\int_0^{y_1}\int_{x_1}^1\gamma(x_1,y_1,x_2,y_2)\,dx_2\,dy_2\,dx_1\,dy_1+\int_0^1\int_0^{\alpha(y_1)}\int_0^{y_1}\int_{x_1}^1\gamma(x_1,y_1,x_2,y_2)\,dx_2\,dy_2\,dx_1\,dy_1\\
&=&\int_0^1\int_0^1\int_0^{y_1}\int_{x_1}^1\gamma(x_1,y_1,x_2,y_2)\,dx_2\, dy_2\,dx_1\,dy_1 \\
&=& SOL(B)
\end{eqnarray*}
\end{proof}

\begin{lem}\label{lem:AC}
For the optimal solutions of $A$ and $C$ we have: $OPT(A)\geq OPT(C)-\epsilon$.
\end{lem}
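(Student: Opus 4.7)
The plan is to convert a maximum-weight independent set of $G$ into a feasible solution of Problem~$A$ whose value trails $OPT(C)$ by at most $O(\epsilon)$. Let $(A_d^*, B_d^*)$ be an optimal independent set, so
\[
OPT(C) \;=\; \sum_{(i,j)\in A_d^*} f_d(i,j) \;+\; \sum_{(i,j)\in B_d^*} g_d(i,j).
\]
First I would \emph{monotonize} this set: replace $A_d^*$ by its rightward closure in each row and $B_d^*$ by its upward closure in each column. Non-negativity of $f_d$ and $g_d$ implies the total weight does not decrease, and the IS property is preserved because, whenever $u(i,j)$ has no edge to some $v(k,l)\in B_d^*$ (i.e.\ $i\geq k$ or $j\leq l$), the augmented vertex $u(i',j)$ with $i'>i$ inherits that same non-adjacency condition (and symmetrically for $B_d^*$).

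Next I would define continuous allocation curves
\[
\alpha(y) \;=\; \epsilon\cdot \min\{i : (i,j)\in A_d^*\}, \quad y\in[\epsilon j,\epsilon(j{+}1)),
\]
and analogously $\beta(x) = \epsilon\cdot\min\{j : (i,j)\in B_d^*\}$ for $x\in[\epsilon i,\epsilon(i{+}1))$, with value $1$ when the set is empty. The resulting $(\alpha,\beta)$ need not be valid: the strict inequalities in the edge definition of $G$ allow the IS to contain both $u(i,j)$ and $v(i,j)$ at the same grid cell, and such a cell will lie simultaneously in $A$ and $B$. The key combinatorial observation---which follows directly from the independent-set condition applied to $u(i_A(j),j)$ and $v(i, j_B(i))$---is that every such \emph{conflict cell} $(i,j)$ must satisfy either $i = \min\{i' : (i',j)\in A_d^*\}$ or $j = \min\{j' : (i,j')\in B_d^*\}$; that is, it lies on the left staircase of $A$ or the bottom staircase of $B$. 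Since each row contributes at most one left-boundary cell and each column at most one bottom-boundary cell, the total number of conflict cells is at most $2n = 2/\epsilon$.

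I would resolve each conflict by shifting $\alpha$ (resp.\ $\beta$) inward by one grid step on the affected row (resp.\ column), which removes the offending cell from the allocation. Because such a shift only shrinks $A$ or $B$, it never creates new conflicts. The Lipschitz continuity of $\phi$ (assumed in this section) bounds $\|f\|_\infty$ and $\|g\|_\infty$ by some constant $M$, so each $f_d(i,j),g_d(i,j)\leq M\epsilon^2$ and the total profit loss from $O(1/\epsilon)$ such cell-removals is $\leq 2M\epsilon = O(\epsilon)$. A final application of Lemma~\ref{lem:mpc-cont} upgrades the result to a proper valid allocation pair without decreasing profit. Thus the constructed element of $\mathcal{AB}$ has objective at least $OPT(C) - O(\epsilon)$, which after absorbing the $M$-dependent constant into $\epsilon$ (or by taking $n$ sufficiently large) yields $OPT(A) \geq OPT(C) - \epsilon$.

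The main obstacle is the conflict-analysis step: the strict edges of $G$ deliberately tolerate same-cell overlaps between $A_d^*$ and $B_d^*$, and \emph{a priori} one could fear many such overlaps, each of which would have to be charged to the profit loss. The crux is the structural lemma that these overlaps are confined to the $O(n)$-sized staircase boundaries of $A$ and $B$; without it the discretization error could be $\Omega(1)$ rather than $O(\epsilon)$.
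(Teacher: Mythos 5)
Your proof is correct and follows essentially the same approach as the paper's: turn a maximum-weight independent set of $G$ into stairwise allocation curves, argue that the ``conflict cells'' where the two regions overlap are confined to an $O(1/\epsilon)$-sized staircase boundary (each contributing $O(\epsilon^2)$ to the loss), and then invoke Lemma~\ref{lem:mpc-cont} to make the pair proper. The only cosmetic differences are that the paper resolves conflicts by giving player~1 priority (simply not crediting the overlapping $g_d$ weights) rather than deleting cells, derives the $O(1/\epsilon)$ bound from monotonicity of the inter-region boundary rather than directly from the independent-set constraint as you do, and bounds $g$ by a fixed constant rather than tracking the Lipschitz constant $M$ as you more carefully do.
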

\begin{proof}
Consider the optimal solution of Problem $C$; we will use it to come up with a feasible solution for Problem $A$ such that $SOL(A)\geq OPT(C)-\epsilon$. We start with the following solution: we allocate the item to player 1 for all valuations $(x,y)$ such that $x_{u(\left\lfloor\frac{x}{\epsilon}\right\rfloor,\left\lfloor\frac{y}{\epsilon}\right\rfloor)}=1$; we allocate to player 2 for all valuations $(x,y)$ such that $x_{v(\left\lfloor\frac{x}{\epsilon}\right\rfloor,\left\lfloor\frac{y}{\epsilon}\right\rfloor)}=1$, {\em and} $x_{u(\left\lfloor\frac{x}{\epsilon}\right\rfloor,\left\lfloor\frac{y}{\epsilon}\right\rfloor)}=0$; and finally, we allocate to nobody for all valuations $(x,y)$ such that $x_{u(\left\lfloor\frac{x}{\epsilon}\right\rfloor,\left\lfloor\frac{y}{\epsilon}\right\rfloor)}=0$ and $x_{v(\left\lfloor\frac{x}{\epsilon}\right\rfloor,\left\lfloor\frac{y}{\epsilon}\right\rfloor)}=0$.

We next show that the resulting allocation regions have the shape of Figure \ref{figure3}, meaning that the borders of those regions consist a valid allocation pair. First notice that for any pair of valuations $(x,y)$ --including those for which $x_{u(\left\lfloor\frac{x}{\epsilon}\right\rfloor,\left\lfloor\frac{y}{\epsilon}\right\rfloor)}=1$ and $x_{v(\left\lfloor\frac{x}{\epsilon}\right\rfloor,\left\lfloor\frac{y}{\epsilon}\right\rfloor)}=1$-- only one player gets allocated the item, so the non-crossing property is satisfied. To see why the regions are rightward and upward closed consider two nodes $u(i,j)$ and $u(i',j)$ on player 1's grid, where $i'>i$. Notice that the set of nodes on player 2's grid that node $u(i,j)$ of player 1's grid is connected to, is a strict superset of the nodes that node $u(i',j)$ is connected to. Hence, if the {\em maximum} weight independent set includes node $u(i,j)$ on the grid of player 1, it should also include $u(i',j)$ for all values $i'>i$.

This gives us two stairwise curves which --although being a valid allocation pair-- may fail to satisfy Condition~\ref{tech-condition}, and hence may not be a feasible solution for Problem $A$. To turn them into a proper valid pair, we can follow the same procedure as the one in the proof of Lemma \ref{lem:mpc-cont} and come up with a feasible solution $SOL(A)$ for Problem $A$.

Because of the aforementioned transformation the cost of this solution is {\em greater or equal} to the cost of the optimal solution $OPT(C)$ {\em minus} the contribution to the weight of the independent set by those nodes $v(i,j)$ for which the corresponding node $u(i,j)$ on the grid of player 1 is also included in the independent set. The reason for that is that for valuations $(x,y)$ such that $x_{v(\left\lfloor\frac{x}{\epsilon}\right\rfloor,\left\lfloor\frac{y}{\epsilon}\right\rfloor)}=1$ {\em and} $x_{u(\left\lfloor\frac{x}{\epsilon}\right\rfloor,\left\lfloor\frac{y}{\epsilon}\right\rfloor)}=1$, our solution explicitly allocates the item only to player 1, therefore losing the weight contribution of node $v(\left\lfloor\frac{x}{\epsilon}\right\rfloor,\left\lfloor\frac{y}{\epsilon}\right\rfloor)$. In what follows we argue that this results in the loss of an $\epsilon$-additive factor, so that the cost of the resulting solution is at least:
$$\sum_{u\in U}x_u f_d(u)+\sum_{v\in V}x_v g_d(v) -\epsilon= OPT(C) -\epsilon$$
To show this we first argue that the number of nodes $v(i,j)$ for which this happens is small, in particular there can only be at most $1/\epsilon$ such nodes. To see this notice that in the constructed feasible solution to Problem $A$, these nodes lie on the boundary between regions where player 1 gets the item and player 2 gets the item; any such boundary has to be monotone, since it corresponds to the overlap of the two allocation curves $\alpha,\beta$, and it can therefore contain at most $1/\epsilon$ nodes. Next notice that the value of $g$ at any point $(x,y)$ is at most 1: indeed, $g(x,y)$ is defined as $y\phi(x,y) - \int_y^1\phi(x,t)dt$, wherever $\frac{\partial}{\partial y}\left[ \max_{y'\geq y}y'\cdot\int_{y'}^1\phi(x,t)\,dt\right]$ is defined, and extended to full range by right continuity. It follows immediately that $g(x,y)\leq 1$ and $w_{v(i,j)}=g_d(i,j)\leq \epsilon^2$; therefore the total weight loss is at most $\frac{1}{\epsilon}\cdot \epsilon^2$ and the lemma follows.
\end{proof}

\begin{figure}[h]
\centering
\includegraphics[scale = 0.4]{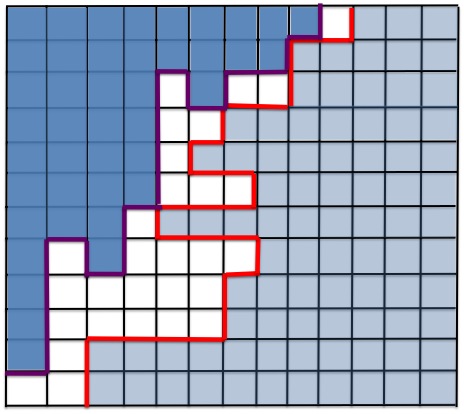}
\caption{The solution for Problem $A$ prior to the transformation of Lemma~\ref{lem:mpc-cont}.}
\label{figure3}
\end{figure}

\begin{lem}\label{lem:DB}
For the optimal solutions of $B$ and $D$ we have:  $OPT(D)\geq OPT(B)$.
\end{lem}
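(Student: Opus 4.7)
The plan is to exhibit, given an optimal solution $\{y^*_{uv}\}_{(u,v)\in E}$ of the discrete transshipment Problem $D$, an explicit feasible $\gamma\in\Gamma$ for the continuous Problem $B$ whose objective value equals $\sum_{(u,v)\in E} y^*_{uv}=OPT(D)$. Feasibility of $\gamma$ then gives $OPT(B)\leq OPT(D)$, as desired.

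\textbf{Construction.} I would take $\gamma$ to be supported on the union of the disjoint ``tubes'' $S_u\times S_v\subset[0,1]^4$ indexed by edges $(u,v)\in E$, where $S_{u(i,j)}=[\epsilon i,\epsilon(i+1)]\times[\epsilon j,\epsilon(j+1)]$ and $S_v$ is defined analogously. On each tube I set
\[
\gamma(x_1,y_1,x_2,y_2)=\frac{y^*_{uv}}{f_d(u)\,g_d(v)}\,f(x_1,y_1)\,g(x_2,y_2),
\]
with the convention that this term is $0$ when $f_d(u)=0$ or $g_d(v)=0$. Non-negativity of $\gamma$ is inherited from $f,g,y^*_{uv}\geq 0$. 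The product form makes the objective computation immediate:
\[
\int_{[0,1]^4}\gamma = \sum_{(u,v)\in E}\frac{y^*_{uv}}{f_d(u)\,g_d(v)}\Bigl(\int_{S_u}f\Bigr)\Bigl(\int_{S_v}g\Bigr) = \sum_{(u,v)\in E}y^*_{uv},
\]
and since $\gamma$ is supported only on tubes $(u,v)\in E$ for which every $(x_2,y_2)\in S_v$ is strictly southeast of every $(x_1,y_1)\in S_u$, this full integral already coincides with the Problem-$B$ objective (the southeast-cone restriction built into the $B$-integrand is vacuous on the support of $\gamma$).

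\textbf{Feasibility.} The key geometric observation is that for $(x_1,y_1)\in S_u$ and $(u,v)=(u(i,j),v(i',j'))\in E$, every $(x_2,y_2)\in S_v$ satisfies $x_2\geq \epsilon i'\geq\epsilon(i+1)\geq x_1$ and $y_2\leq\epsilon(j'+1)\leq\epsilon j\leq y_1$, so the entire tube $S_v$ lies in the southeast quadrant of $(x_1,y_1)$. Hence, using $\int_{S_v}g=g_d(v)$ and the Problem-$D$ constraint $\sum_{v:(u,v)\in E}y^*_{uv}\geq f_d(u)$,
\[
\int_0^{y_1}\int_{x_1}^1\gamma\,dx_2\,dy_2 = \frac{f(x_1,y_1)}{f_d(u)}\sum_{v:(u,v)\in E}y^*_{uv} \geq f(x_1,y_1).
\]
The second constraint of Problem $B$ follows from the symmetric computation, using $\sum_{u:(u,v)\in E}y^*_{uv}\geq g_d(v)$.

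\textbf{Main obstacle.} The delicate point is the grid boundary: vertices $u(n-1,j)$ and $u(i,0)$ of $U$ have no outgoing edges in $G$, so Problem $D$ is strictly feasible only if the corresponding $f_d(u)$-values vanish, and the construction above leaves the pointwise $f$-constraint uncovered for $(x_1,y_1)$ in these outer squares (and symmetrically for $g$ on the boundary of $V$). I would handle this by augmenting $G$ with a row and a column of zero-cost sink vertices that absorb the boundary mass; because $f,g$ are bounded, the total boundary contribution to $f_d$ and $g_d$ is $O(\epsilon)$, so this changes $OPT(D)$ by at most $O(\epsilon)$. Alternatively, one observes that this $O(\epsilon)$ defect is of the same order as the additive $\epsilon$ already paid in Lemma~\ref{lem:AC} and is absorbed when the four lemmas are chained and $\epsilon\to 0$. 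In either case the interior construction above yields the inequality claimed in the lemma.
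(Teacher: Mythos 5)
Your construction of $\gamma$ is identical to the paper's (note $w_u = f_d(u)$ and $w_v = g_d(v)$, so your normalization coincides with theirs), and both the feasibility check via the Problem-$D$ covering constraints and the objective computation proceed along the same lines, so this is essentially the paper's proof. Your ``main obstacle'' remark does flag a genuine point the paper glosses over: because the edge relation is the \emph{strict} southeast one, the vertices $u(n-1,\cdot)$ and $u(\cdot,0)$ in $U$ (and symmetrically the boundary vertices of $V$) have no incident edges, so Problem $D$ is feasible only if the corresponding $f_d$ (resp.\ $g_d$) weights vanish, which need not hold exactly; the paper's proof simply assumes $y$ is a feasible solution of $D$ and never checks this. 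Your two proposed remedies --- augmenting $G$ with zero-cost sink vertices, or observing that the $O(\epsilon)$ boundary mass is absorbed when Lemmas~\ref{lem:AB}--\ref{lem:DB} are chained and $\epsilon\to 0$ --- are both sound, and either one cleanly patches the paper's argument as well.
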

\begin{proof}
Given a feasible solution for Problem $D$, we will come up with a feasible solution of the same cost for Problem $B$. The optimum solution for Problem $B$ will have at most that cost and the lemma follows.

We start by defining $\gamma(x_1,y_1,x_2,y_2) $, for any pair of points $(x_1,y_1), (x_2,y_2)$ where the $\epsilon^2$-area square containing $(x_2,y_2)$ lies in the (strict) southeast orthant of the $\epsilon^2$-area square containing $(x_1,y_1)$, as follows:

$$\gamma(x_1,y_1,x_2,y_2)= f(x_1,y_1)\cdot g(x_2,y_2)\cdot\frac{y_{uv}}{w_uw_v},~~~~~\text{if}~~\left\lfloor\frac{x_1}{\epsilon}\right\rfloor< \left\lfloor\frac{x_2}{\epsilon}\right\rfloor~~\text{and}~~\left\lfloor\frac{y_1}{\epsilon}\right\rfloor> \left\lfloor\frac{y_2}{\epsilon}\right\rfloor$$

 where $u\in U$ (resp. $v\in V$) is the grid point $u\left(\left\lfloor\frac{x_1}{\epsilon}\right\rfloor,\left\lfloor\frac{y_1}{\epsilon}\right\rfloor\right)$ (resp. $v\left(\left\lfloor\frac{x_2}{\epsilon}\right\rfloor,\left\lfloor\frac{y_2}{\epsilon}\right\rfloor\right)$) that corresponds to the little $\epsilon^2$-area square containing point $(x_1,y_1)$ (resp. $(x_2,y_2))$. Finally, we let:
 
 $$\gamma(x_1,y_1,x_2,y_2)=0 ,~~~~~\text{if}~~\left\lfloor\frac{x_1}{\epsilon}\right\rfloor\geq \left\lfloor\frac{x_2}{\epsilon}\right\rfloor~~\text{or}~~\left\lfloor\frac{y_1}{\epsilon}\right\rfloor\leq\left\lfloor\frac{y_2}{\epsilon}\right\rfloor$$
 
We next verify that the function $\gamma$ defined above satisfies the constraints of Problem $B$. Since the non-negativity constraint is obviously satisfied, we only need to check that $\gamma$ satisfies the first and second constraints of Problem $B$. We only provide the proof for the first constraint and the proof for the second constraint follows along the exact same lines: 

\begin{eqnarray*}
&& \int_0^{y_1}\int_{x_1}^1\gamma(x_1,y_1,x_2,y_2)\,dx_2\, dy_2\\
  &=&\sum_{i>\left\lfloor\frac{x_1}{\epsilon}\right\rfloor, j<\left\lfloor\frac{y_1}{\epsilon}\right\rfloor}\int_{\epsilon i}^{\epsilon (i+1)}\int_{\epsilon j}^{\epsilon (j+1)}f(x_1,y_1)\cdot  g(x_2,y_2)\cdot\frac{y_{u\left(\left\lfloor\frac{x_1}{\epsilon}\right\rfloor,\left\lfloor\frac{y_1}{\epsilon}\right\rfloor\right),v\left(\left\lfloor\frac{x_2}{\epsilon}\right\rfloor,\left\lfloor\frac{y_2}{\epsilon}\right\rfloor\right)}}{w_{u\left(\left\lfloor\frac{x_1}{\epsilon}\right\rfloor,\left\lfloor\frac{y_1}{\epsilon}\right\rfloor\right)}\cdot w_{v\left(\left\lfloor\frac{x_2}{\epsilon}\right\rfloor,\left\lfloor\frac{y_2}{\epsilon}\right\rfloor\right)}}\,dy_2\,dx_2\\
 &=&f(x_1,y_1)\cdot\sum_{i>\left\lfloor\frac{x_1}{\epsilon}\right\rfloor, j<\left\lfloor\frac{y_1}{\epsilon}\right\rfloor}\frac{y_{u\left(\left\lfloor\frac{x_1}{\epsilon}\right\rfloor,\left\lfloor\frac{y_1}{\epsilon}\right\rfloor\right),v(i,j)}}{w_{u\left(\left\lfloor\frac{x_1}{\epsilon}\right\rfloor,\left\lfloor\frac{y_1}{\epsilon}\right\rfloor\right)}\cdot w_{v(i,j)}}\cdot\int_{\epsilon i}^{\epsilon (i+1)}\int_{\epsilon j}^{\epsilon (j+1)}  g(x_2,y_2)\,dy_2\,dx_2\\
  &=&f(x_1,y_1)\cdot\sum_{i>\left\lfloor\frac{x_1}{\epsilon}\right\rfloor, j<\left\lfloor\frac{y_1}{\epsilon}\right\rfloor}\frac{y_{u\left(\left\lfloor\frac{x_1}{\epsilon}\right\rfloor,\left\lfloor\frac{y_1}{\epsilon}\right\rfloor\right),v(i,j)}}{w_{u\left(\left\lfloor\frac{x_1}{\epsilon}\right\rfloor,\left\lfloor\frac{y_1}{\epsilon}\right\rfloor\right)}\cdot w_{v(i,j)}}\cdot w_{v(i,j)}\\
 &=&f(x_1,y_1)\cdot\sum_{i>\left\lfloor\frac{x_1}{\epsilon}\right\rfloor, j<\left\lfloor\frac{y_1}{\epsilon}\right\rfloor}\frac{y_{u\left(\left\lfloor\frac{x_1}{\epsilon}\right\rfloor,\left\lfloor\frac{y_1}{\epsilon}\right\rfloor\right),v(i,j)}}{w_{u\left(\left\lfloor\frac{x_1}{\epsilon}\right\rfloor,\left\lfloor\frac{y_1}{\epsilon}\right\rfloor\right)}}\\
&\geq& f(x_1,y_1)
 \end{eqnarray*} 
where in the first equality we split the integration over discretized square regions of area $\epsilon^2$ (the same that are used in the discrete auxiliary Problems $C$ and $D$) and in the second equality we rearranged the order of summation and integration, noticing that the weights $w$ and flows $y$ remain constant across the discretized squares (independently of the actual value of $(x_2,y_2)$). In the third equality we used the definition of the weight $w$ and in the last inequality we used the fact that $y$ is a feasible solution for Problem $D$ and therefore $\sum_{v\in E}y_{uv}\geq w_u$.

We conclude our proof by showing that the cost of the feasible solution we produced is exactly $OPT(D)$: 
\begin{eqnarray*}
&&\int_0^1\int_0^1\int_0^{y_1}\int_{x_1}^1\gamma(x_1,y_1,x_2,y_2)\,dx_2\, dy_2\,dx_1\,dy_1\\
&=&\sum_{i,j}\int_{\epsilon i}^{\epsilon (i+1)}\int_{\epsilon j}^{\epsilon (j+1)}\int_0^{y_1}\int_{x_1}^1\gamma(x_1,y_1,x_2,y_2)\,dx_2\, dy_2\,dx_1\,dy_1\\
&=&\sum_{i,j}\int_{\epsilon i}^{\epsilon (i+1)}\int_{\epsilon j}^{\epsilon (j+1)} f(x_1,y_1)\cdot\sum_{i'>\left\lfloor\frac{x_1}{\epsilon}\right\rfloor, j'<\left\lfloor\frac{y_1}{\epsilon}\right\rfloor}\frac{y_{u\left(\left\lfloor\frac{x_1}{\epsilon}\right\rfloor,\left\lfloor\frac{y_1}{\epsilon}\right\rfloor\right),v(i',j')}}{w_{u\left(\left\lfloor\frac{x_1}{\epsilon}\right\rfloor,\left\lfloor\frac{y_1}{\epsilon}\right\rfloor\right)}}
\,dx_1\,dy_1\\
&=&\sum_{i,j}\sum_{i'>i, j'<j}\frac{y_{u(i,j),v(i',j')}}{w_{u(i,j)}}\cdot \int_{\epsilon i}^{\epsilon (i+1)}\int_{\epsilon j}^{\epsilon (j+1)} f(x_1,y_1)
\,dx_1\,dy_1\\
&=&\sum_{i,j}\sum_{i'>i, j'<j}\frac{y_{u(i,j),v(i',j')}}{w_{u(i,j)}}\cdot w_{u(i,j)}\\
&=&\sum_{i,j}\sum_{i'>i, j'<j}y_{u(i,j),v(i',j')}\\
&=& OPT(D)
\end{eqnarray*}
where in the first equality we split the integration of $(x_1,y_1)$ over discretized square regions of area $\epsilon^2$ and in the second equality we plugged in the expression for $\int_0^{y_1}\int_{x_1}^1\gamma(x_1,y_1,x_2,y_2)\,dx_2\, dy_2$ that we had derived from our previous proof establishing that the first constraint of Problem $B$ was satisfied. In the third equality we once again rearranged the order of summation and integration, noticing that the weights $w$ and flows $y$ remain constant across the discretized squares (independently of the actual value of $(x_1,y_1)$), in the fourth equality we used the definition of the weight $w$ and in the last equality we replaced with the objective function of Problem $D$.

\end{proof}

\begin{proof}[\it Proof of the Duality Theorem]
First notice that, by strong duality and since the constraint matrix of Problem C is totally unimodular, we have that $OPT(C) = OPT(D)$; combining this with Lemmas \ref{lem:AC} and \ref{lem:DB} we get $OPT(A)\geq OPT(B) - \epsilon$. From Lemma \ref{lem:AB} we get $OPT(A)\leq OPT(B)$. By having $\epsilon\rightarrow 0$ we get the result.
\end{proof}

\subsection{The Algorithm}
The proof of the Main Theorem suggests a fully polynomial-time approximation scheme (FPTAS) for the continuous case, that is, a mechanism that approximates the optimal profit within additive error $\epsilon$, and runs in time polynomial in $1\over \epsilon$.   In the algorithm and the correctness proof, we assume that the continuous joint distribution $\phi$ is Lipschitz continuous, and that it is presented through oracle access.  It is easy to see that these assumptions are essentially necessary, in that no approximation (or meaningful solution of any other nature) is possible when the function can be arbitrarily discontinuous, or is inaccessible for large parts of the domain.

\begin{algorithm}[H]
 \caption{{\sc OptimalAuction} for two bidders with continuous distributions}\label{algorithm}
 \begin{algorithmic}[1]
 \State{\bf Input:} probability distribution $\phi\in [0,1]^2$
 \State Compute $f(\cdot,\cdot),g(\cdot,\cdot)$
 \State Discretize the unit plane and construct the bipartite graph $G$ as described in Section \ref{proof_main}
    \State Compute a {\sc Maximum Weight Independent Set} for $G$ 
    \State{\bf Output: $(\alpha,\beta)$},  the valid allocation pair corresponding to the stair-like curves of Figure \ref{figure3}
 \end{algorithmic}
\end{algorithm}

The following theorem establishes that our algorithm has the desired properties.

\begin{thm}\label{thm:alg}
Algorithm \ref{algorithm} returns a truthful mechanism that approximates the optimal profit within $\epsilon$ additive error; moreover the algorithm runs in time polynomial in $1/\epsilon$.
\end{thm}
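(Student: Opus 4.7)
The plan is to verify the two claims of the theorem (truthfulness and $\epsilon$-approximation of the optimal profit, with polynomial running time in $1/\epsilon$) by stitching together the machinery developed in the proof of the Duality Theorem. Concretely, given the maximum weight independent set on $G$ computed in Step 4, we already know from Lemma~\ref{lem:AC} how to turn it into a feasible solution of Problem $A$ losing at most an additive $\epsilon$; the theorem asserts that Algorithm~\ref{algorithm} produces exactly such a solution efficiently.

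Truthfulness would be the first thing I would address. The output is a valid allocation pair constructed as in the proof of Lemma~\ref{lem:AC}: allocate to bidder 1 on squares whose $u$-node is in the independent set, to bidder 2 on squares whose $v$-node is in the independent set but whose $u$-node is not, and to nobody otherwise. The argument of Lemma~\ref{lem:AC} establishes that the induced boundaries form a valid allocation pair (rightward/upward closedness inherited from the monotonicity of the independent set in each grid, non-crossing from the disjointness forced by the edges of $G$). Passing through the transformation of Lemma~\ref{lem:mpc-cont} produces a \emph{proper} valid pair $(\alpha,\beta)$, and then Theorem~\ref{thm:characterization} defines the unique payments making the resulting deterministic mechanism IC, IR, and NPT.

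For the approximation guarantee I would assemble the chain of inequalities already proved. Since the constraint matrix of Problem $C$ is totally unimodular, LP duality gives $OPT(C)=OPT(D)$. Combining with Lemma~\ref{lem:DB} and Lemma~\ref{lem:AB} yields $OPT(C)=OPT(D)\geq OPT(B)\geq OPT(A)$. On the other hand, the construction in Lemma~\ref{lem:AC}, applied to the maximum weight independent set returned in Step~4, outputs a feasible $A$-solution whose profit is at least $OPT(C)-\epsilon$. Hence
\[
Profit(\alpha,\beta,\phi)\;\geq\; OPT(C)-\epsilon \;\geq\; OPT(A)-\epsilon,
\]
which is the desired additive guarantee. (The transformation of Lemma~\ref{lem:mpc-cont} only \emph{increases} the profit, so it is harmless here.)

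For the running time, the discretization uses $n=1/\epsilon$, so $G$ has $O(1/\epsilon^2)$ vertices on each side and $O(1/\epsilon^4)$ edges. Step~2 computes $f_d$ and $g_d$ at the $O(1/\epsilon^2)$ grid points using the oracle for $\phi$ and numerical integration over $\epsilon^2$-squares; Lipschitz continuity of $\phi$ bounds the additional integration error absorbed into the $\epsilon$-slack. Step~4 is a maximum weight independent set on a bipartite graph, solvable via min-cut/min-cost flow in polynomial time in the graph size, hence in $\mathrm{poly}(1/\epsilon)$. The post-processing of Lemma~\ref{lem:mpc-cont} acts on the $O(1/\epsilon)$ ``staircase'' breakpoints of each boundary and is similarly polynomial. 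The main subtle point, and the one I would expect to spend the most care on, is bookkeeping the two sources of loss (the $\epsilon$ in Lemma~\ref{lem:AC} arising from discarding $\leq 1/\epsilon$ boundary grid cells of total weight $\leq 1/\epsilon \cdot \epsilon^2$, and the error incurred by approximating $f_d,g_d$ via the oracle under the Lipschitz hypothesis) and confirming that both fit into a single additive $\epsilon$ after rescaling; beyond this accounting, the theorem follows directly from the lemmas already proved.
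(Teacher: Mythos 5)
Your argument and the paper's diverge at a crucial point, and I think the divergence reveals a genuine gap in your proposal. You propose to append a post-processing step to Algorithm~\ref{algorithm}: apply the transformation of Lemma~\ref{lem:mpc-cont} to the staircase curves produced by the independent-set solution, so that the output is a \emph{proper} valid pair to which Lemma~\ref{lem:equiv} applies directly. But that transformation replaces $\alpha(y)$ by $\arg\max_{x'\geq\alpha(y)}x'\int_{x'}^1\phi(t,y)\,dt$ for \emph{every} $y\in[0,1]$, and the resulting $\alpha'(y)$ generally varies continuously in $y$ even where the staircase $\alpha(y)$ was flat --- it is not a finite object, and the argmax over a continuum cannot be computed exactly from oracle access to $\phi$. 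Your remark that the post-processing ``acts on the $O(1/\epsilon)$ staircase breakpoints'' is not correct: the transformed pair is not a staircase, so it can neither be represented nor computed in $\mathrm{poly}(1/\epsilon)$ time. (Inside the proof of Lemma~\ref{lem:AC} this same transformation is harmless because there it is used only as an existence argument; it becomes problematic the moment you treat it as an algorithmic step.)

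The paper's proof of Theorem~\ref{thm:alg} deliberately avoids this issue: Step~5 of the algorithm outputs the \emph{raw} staircase pair $(\alpha,\beta)$, and the proof's entire content is a Lipschitz-continuity argument (inequality~\eqref{ineq:correctness0}) showing that, at every $y$, the staircase $\alpha(y)$ is within $\Theta(\epsilon)$ of the value that the Lemma~\ref{lem:mpc-cont} transformation \emph{would} produce, in the sense that $\alpha(y)\int_{\alpha(y)}^1\phi(t,y)\,dt\geq x^\ast\int_{x^\ast}^1\phi(t,y)\,dt-\Theta(\epsilon)$. The key observation is that the boundary cell of the independent set has nonzero weight, hence contains a point $(x_1,y_1)$ with $f(x_1,y_1)>0$, which by definition of $f$ is already a local maximizer of $x\int_x^1\phi(t,y_1)\,dt$; Lipschitz continuity of $x\int_x^1\phi(t,y)\,dt$ then transfers this near-optimality from $(x_1,y_1)$ to $(\alpha(y^\ast),y^\ast)$. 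This lets the paper compare the \emph{actual} profit of the staircase output against $OPT(C)$, rather than first ``repairing'' the curves. Your chain of inequalities $OPT(C)=OPT(D)\geq OPT(B)\geq OPT(A)$ and the $OPT(C)-\epsilon$ lower bound are correct and are the same ones the paper uses; what you are missing is the bridge from the uncomputable proper curves $(\alpha',\beta')$ back to the actual algorithmic output $(\alpha,\beta)$, and that bridge is exactly the Lipschitz argument, which is the heart of the paper's proof and where the Lipschitz hypothesis on $\phi$ is genuinely used.
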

\begin{proof}
Algorithm \ref{algorithm} returns a valid allocation pair so it is truthful by construction. However the allocation pair $(\alpha,\beta)$ returned may well not satisfy condition~\eqref{tech-condition} and may consequently not constitute a feasible solution to Problem A. This is problematic since it does not allow us to use Lemma~\ref{lem:equiv} to compute the profit of the auction returned. To that end we need to establish that the violation of condition~\eqref{tech-condition} is --in some sense-- negligible; we do that next.

Suppose that curve $\alpha(y)$ violates condition~\eqref{tech-condition} for some $y^\ast\in[0,1]$ and let 
$$x^\ast = \arg\max_{x\geq \alpha(y^\ast)} x\cdot\int_{x}^1\phi(t,y^\ast)\,dt,$$
be the minimum $x$ for which we could create a new solution $\alpha'$ by setting $\alpha'(y^\ast)= x^\ast$ and have condition~\eqref{tech-condition} restored, while not altering the profit of our auction.\footnote{The reader is referred to Lemma~\ref{lem:mpc-cont} for further discussion on this point.} We will argue that
 \begin{equation}\label{ineq:correctness0}
 \alpha(y^\ast)\cdot\int_{\alpha(y^\ast)}^1\phi(t,y^\ast)\,dt \geq x^\ast\cdot\int_{x^\ast}^1\phi(t,y^\ast)\,dt - \Theta(\epsilon)
 \end{equation}
To do that we consider the node corresponding to the little square on the unit plane containing $(\alpha(y^\ast),y^\ast)$. Since this node belongs to the boundary of the allocation region of player 1, we can assume wlog that it has non-zero weight; hence there must exist some point $(x_1,y_1)$ in the corresponding square on the unit plane with $f(x_1,y_1)>0$. By the definition of $f$ this immediately implies that 
$$x_1\cdot\int_{x_1}^1\phi(t,y_1)\,dt \geq x\cdot\int_x^1\phi(t,y_1)\,dt,~~~~\text{for all }x\geq x_1$$
and therefore in particular that
\begin{equation}\label{ineq:correctness}
x_1\cdot\int_{x_1}^1\phi(t,y_1)\,dt \geq x^\ast\cdot\int_{x^\ast}^1\phi(t,y_1)\,dt.
\end{equation}
Since the $l_1$-distance of points $(\alpha(y^\ast),y^\ast)$ and $(x_1,y_1)$ and of points $(x^\ast,y^\ast)$ and $(x^\ast,y_1)$ is at most $2\epsilon$, we get that:
$$\alpha(y^\ast)\cdot\int_{\alpha(y^\ast)}^1\phi(t,y^\ast)\,dt 
\geq x_1\cdot\int_{\alpha(x_1)}^1\phi(t,y_1)\,dt - 2\lambda\epsilon
\geq x^\ast\cdot\int_{x^\ast}^1\phi(t,y_1)\,dt - 2\lambda\epsilon
\geq x^\ast\cdot\int_{x^\ast}^1\phi(t,y^\ast)\,dt - 4\lambda\epsilon$$
where in the first and third inequalities we used the fact that $x\cdot\int_x^1\phi(t,y)\,dt$ is Lipschitz-continuous for some constant $\lambda$ and in the second inequality we used inequality~\eqref{ineq:correctness}. The exact same argument applies for $\beta(x)$ as well.

We are now ready to prove a lower bound on the profit of the auction returned by our algorithm; in what follows we use $\alpha'(y)$ and $\beta'(x)$ to denote the allocation curves that would result by the aforementioned transformation. Note that by construction it holds that
\begin{equation}\label{ineq:correctness1}
\alpha'(y)\cdot\int_{\alpha'(y)}^1 \phi(x,y)\, dx = \int_{\alpha'(y)}^1 f(x,y)\, dx~~~\text{and}~~~
\beta'(x)\cdot\int_{\beta'(x)}^1 \phi(x,y)\, dy = \int_{\beta'(x)}^1 g(x,y)\, dy
\end{equation}
so the profit of the algorithm is:
\begin{eqnarray*}
&&\int_0^1\left[\alpha(y)\cdot\int_{\alpha(y)}^1 \phi(x,y)\,dx\right]\,dy+\int_0^1\left[\beta(x)\cdot\int_{\beta(x)}^1 \phi(x,y)\,dy\right]\,dx\\
&\geq&\int_0^1\left[\alpha'(y)\cdot\int_{\alpha'(y)}^1 \phi(x,y)\,dx-\Theta(\epsilon)\right]\,dy+\int_0^1\left[\beta'(x)\cdot\int_{\beta'(x)}^1 \phi(x,y)\,dy-\Theta(\epsilon)\right]\,dx\\
&=&\int_0^1\int_{\alpha'(y)}^1f(x,y)\,dx\,dy+\int_0^1\int_{\beta'(x)}^1g(x,y)\,dy\,dx - \Theta(\epsilon)\\
&=& OPT(C) - \Theta(\epsilon)\\
&\geq& OPT(A) - \Theta(\epsilon)
\end{eqnarray*}
where in the first inequality we used inequality~\eqref{ineq:correctness0}, in the first equality we used~\eqref{ineq:correctness1} and in the last inequality we used our Main Theorem from the previous section.
 
In terms of running time, the discretized approximations of $f$ and $g$ are trivial (because of Lipschitz continuity, we can take $f(i,j) = f(i\epsilon,j\epsilon)$, and similarly for $g$).  Solving the {\sc Maximum Weight Independent Set} problem is done exactly as in the previous section.
\end{proof}

\section{NP-completeness}\label{sec:inapproximability}
We show that for 3 bidders the problem of designing an approximately optimal (deterministic) auction becomes NP-hard.  The current proof establishes hardness for a small threshold around 0.05\%; we believe that this will not be too hard to improve upon.  

\subsection{A geometric characterization for three bidders}
We start by formally defining the discrete version of the problem we will prove to be NP-hard; to simplify the exposition of the problem, we assume that the support of the discrete distribution is $\mathcal{G}=[S]\times[S]\times[S]$.  We are interested in this problem:

\begin{defn}[{\sc 3OptimalAuctionDesign}]\label{defn:3optauctdesign}
Given a joint discrete probability distribution $\phi$ supported on $\mathcal{G}$, find the optimal, ex-post IC and IR deterministic auction, denoted by a 3-dimensional allocation matrix $A$, where $A[x,y,z] = i$, with $i$ being the index of the bidder who gets the item when the bid vector is $(x,y,z)$, or 0 if the auctioneer keeps the item.
\end{defn}

As was the case with 2 bidders, the following notion of marginal profit contribution, appropriately modified for the discrete case, will be useful to our proof.
\begin{defn}
The discrete analogues of the marginal profit contribution functions (Definition \ref{def:mpc}) for each player are the following: 
$$f(x,y,z) =\max\left\{ x\cdot\sum_{x'\geq x}\phi(x',y,z) - \sum_{x'> x} f(x',y,z),0\right\}~~\text{for player 1}$$
$$g(x,y,z) =\max\left\{ y\cdot\sum_{y'\geq y}\phi(x,y',z) - \sum_{y'> y} f(x,y',z),0\right\}~~\text{for player 2}$$
$$h(x,y,z) =\max\left\{ z\cdot\sum_{z'\geq z}\phi(x,y,z') - \sum_{z'> z} f(x,y,z'),0\right\}~~\text{for player 3}$$
\end{defn}

\paragraph{The Segments.} Given a distribution $\phi(x,y,z)$ over the points of $\mathcal{G}$, any node $(x,y,z)$ of $\mathcal{G}$ with $f(x,y,z)>0$ is the starting point of what we shall henceforth be calling an \emph{$x$-segment}: an interval (sequence of points) starting at node $(x,y,z)$ and including all nodes $(x',y,z)$ with $x'\geq x$. The {\em weight} of this segment is $\sum_{x'\geq x}f(x',y,z)$. We define segments across the other dimensions analogously. The following problem\footnote{The reader may notice that the {\sc 3Segments} problem is a maximum weight independent set problem in disguise.} is essentially equivalent to the auction design problem:

\begin{defn}[{\sc 3Segments}]\label{defn:segments}
Given a joint discrete probability distribution $\phi$ supported on $\mathcal{G}$, which induces a set of segments on $\mathcal{G}$ as described above, find a subset of non-intersecting segments with maximum sum of weights.
\end{defn}

\begin{lem}\label{lem:MWIS23S}
The problem {\sc 3Segments($\phi$)} is equivalent to {\sc 3OptimalAuctionDesign$(\phi)$} via approximation-preserving reductions.
\end{lem}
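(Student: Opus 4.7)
The plan is to exhibit value-preserving translations in both directions between feasible solutions of the two problems; the sum of the weights of a set of non-intersecting segments will correspond exactly (up to provably negligible slack) to the expected profit of the associated auction. I will lean on the fact that, by the recursive definition of $f,g,h$, whenever $f(x,y,z)>0$ the telescoping identity $\sum_{x'\geq x}f(x',y,z)=x\cdot \sum_{x'\geq x}\phi(x',y,z)$ holds, and symmetrically for the other two bidders; thus the weight of an $x$-segment starting at a valid starting node equals the expected revenue from charging price $x$ to bidder $1$ along the $(y,z)$-line, which is exactly what an IC/IR/NPT auction collects from that line when $x$ is the threshold (by Theorem~\ref{thm:characterization}).

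For the segments-to-auction direction, given a set $\mathcal{S}$ of pairwise non-intersecting segments, I allocate the item at $(x,y,z)$ to bidder $i\in\{1,2,3\}$ iff the point is covered by some segment of $\mathcal{S}$ in the $i$-th direction, and to no one otherwise. Non-intersection guarantees that at most one of the three cases applies, so the allocation is well-defined. Because each segment of direction $i$ is a suffix in the $i$-th coordinate, bidder $i$'s winning set is monotone in her own bid, so Theorem~\ref{thm:characterization} supplies the IC/IR/NPT payment rule with threshold equal to the starting coordinate of the covering segment. Summing expected payments line by line and applying the telescoping identity above collapses each line's contribution to the weight of the corresponding segment, so the auction's expected profit equals the total weight of $\mathcal{S}$.

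For the auction-to-segments direction, I start from a deterministic IC/IR/NPT auction $A$ and first invoke the obvious three-bidder analog of Lemma~\ref{lem:mpc-cont}: if, on some $(y,z)$-line, bidder~1's threshold $t_1(y,z)$ satisfies $f(t_1(y,z),y,z)=0$, I monotonically raise $t_1(y,z)$ until either $f$ becomes positive or the line is vacated, which by the recursive definition of $f$ cannot decrease revenue and preserves monotonicity and disjointness of allocation regions (and hence IC/IR); do the same for bidders $2$ and $3$. Then, for every line in direction $i$ on which bidder $i$ wins a nonempty suffix, I extract the $i$-segment with starting point equal to the (now properly placed) threshold. The resulting collection is pairwise non-intersecting because no point of $\mathcal G$ is allocated to two bidders, and its total weight equals the profit of the (modified, hence at least as profitable) auction by the same telescoping identity. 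Composing the two translations gives an approximation-preserving equivalence: an $\alpha$-approximate solution to either problem yields an $\alpha$-approximate solution to the other.

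The main obstacle is the properness step of the reverse direction: one must verify carefully that shifting thresholds from $f=0$ points to $f>0$ points (or to the end of the line) can be done simultaneously and independently across lines and across bidders without creating overlaps or destroying monotonicity. This is a direct three-dimensional adaptation of Lemma~\ref{lem:mpc-cont}, but it requires noting that raising a threshold in direction $i$ only shrinks bidder $i$'s winning region and hence can only help, not hurt, the disjointness with the other bidders' regions; once that is established, the rest of the reduction is a bookkeeping exercise on the telescoping identity.
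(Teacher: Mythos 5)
Your proposal is correct and follows exactly the correspondence the paper sketches in one sentence: an $i$-segment starting at a point is included if and only if the auction allocates the item to bidder $i$ there and nowhere earlier on that line. The added value in your write-up is that you make explicit the two things the paper's ``rather immediate'' sketch silently relies on---(a) the telescoping identity $\sum_{x'\geq x}f(x',y,z)=x\cdot\sum_{x'\geq x}\phi(x',y,z)$ whenever $f(x,y,z)>0$, which is precisely why segment weight equals line revenue, and (b) the need to pre-process an arbitrary IC/IR/NPT auction into a \emph{proper} one (the three-bidder analog of Lemma~\ref{lem:mpc-cont}) so that every threshold sits at a point with positive marginal profit contribution and hence at a legitimate segment apex. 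Your worry in the last paragraph is easily dispatched as you suspect: raising a threshold only shrinks one bidder's suffix, so monotonicity, disjointness, and the other two bidders' regions are unaffected, and the per-line modifications are independent; so the properness step is sound and nothing further is needed.
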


\begin{proof}  {\em (Sketch:)}  The correspondence between solutions of the two problems is rather immediate, with the $x$-segment (respectively, $y$-segment, $z$-segment) at point $(x,y,z)$ included in the output set of segments if and only if $A[x,y,z] = 1$ (respectively, 2, 3) and $A[x',y,z]=0$ for all $x'<x$ (respectively, $A[x,y',z]=0$ for all $y'<y$ and $A[x,y,z']=0$ for all $z'<z$).
\end{proof}

\subsection{The construction}
\label{app:construction}
It suffices to show that {\sc 3Segments} is NP-hard to approximate for some constant; we do that by reducing from {\sc 3CatSat}, a  special case of {\sc 3Sat}:

\begin{defn}\label{defn:3catsat}
Let {\sc 3CatSat} be the {\sc 3Sat} problem with the input formula restricted to be of the following form. We have three types (categories) of variables $\{x_i\}_{i=1\ldots n_x},\{y_i\}_{i=1\ldots n_y}$ and $\{z_i\}_{i=1\ldots n_z}$, i.e. a total of $n=n_x+n_y+n_z$ variables, and $m$ clauses of the following form: every clause has at most one literal from every type (e.g. $(\bar{x_2}\vee\bar{y_3}\vee z_1)$ or $(\bar{x_5}\vee z_7)$).
\end{defn}

\begin{lem}\label{lem:3catsat}
{\sc 3CatSat} is NP-hard to approximate better than 79/80.
\end{lem}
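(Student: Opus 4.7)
I would prove the hardness via a reduction from standard Max-3SAT, which by H\aa stad's theorem is NP-hard to approximate to within any factor better than $7/8$. Given a 3SAT formula $\varphi$ with variables $v_1,\dots,v_n$ and $m$ clauses, the construction is the familiar category-copy gadget: introduce three disjoint copies $\{x_i\},\{y_i\},\{z_i\}$ of each variable (one copy in each category); for each original clause, arbitrarily assign its $\le 3$ literals to distinct categories, replacing each $v_j$ by the category-copy dictated by its assigned slot; finally, for every $i$ append the pairwise equivalence gadgets $x_i\leftrightarrow y_i$, $y_i\leftrightarrow z_i$, $x_i\leftrightarrow z_i$, each encoded as two two-literal clauses whose two literals come from distinct categories. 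All of these clauses contain at most one literal per category, so they are valid 3CatSat clauses.

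\textbf{Completeness and soundness.} Completeness is immediate: a satisfying assignment of $\varphi$ extends to a satisfying assignment of the new instance by setting all three copies equal to the original value, which satisfies every original and every equivalence clause. For soundness, given any 3CatSat assignment $\sigma$, extract a 3SAT assignment $\sigma'$ by taking, for each $i$, the majority value of the triple $(x_i,y_i,z_i)$. The main observation is that whenever a copy deviates from the majority value of its triple, at least one of the two pairwise-equivalence clauses involving that copy is falsified by $\sigma$; so each ``deviation'' in $\sigma$ can be charged to an unsatisfied equivalence clause. Any original clause that $\sigma'$ fails to satisfy is either already unsatisfied by $\sigma$ or has each of its three literals held by a minority copy, and the latter requires (at least) a bounded number of deviations, hence of unsatisfied equivalence clauses in $\sigma$. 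Consequently, if $\sigma$ leaves $U$ 3CatSat clauses unsatisfied, then $\sigma'$ leaves at most $O(U)$ clauses of $\varphi$ unsatisfied.

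\textbf{Gap calculation.} To convert H\aa stad's $7/8+\epsilon$ threshold into a clean constant for 3CatSat, I would start from a bounded-occurrence variant of Max-3SAT (so that $n = \Theta(m)$, and in particular the number of extra equivalence clauses is a bounded multiple of $m$), for which the same $7/8$ inapproximability is known. Writing $M=m+6n$ for the total number of clauses produced, completeness gives a YES instance with all $M$ clauses satisfied, while soundness gives a NO instance in which at least $\Omega(m)$ clauses are necessarily unsatisfied; choosing the bounded-occurrence parameter so that $6n$ is sufficiently small relative to $m$ yields the $1/80$ gap and hence the $79/80$ threshold.

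\textbf{Main obstacle.} The conceptual steps are standard; the real work is the accounting. The majority-extraction argument for soundness has to charge violated original clauses to violated equivalence clauses without double counting, and the ratio of equivalence clauses to original clauses has to be tuned (via the choice of the starting Max-3SAT variant, and possibly via replicating equivalence clauses to inflate their effective weight in the unweighted instance) so that the composed gap is exactly $1-1/80$. Everything else, including the verification that every introduced clause really is 3CatSat-admissible, is a syntactic check on the construction.
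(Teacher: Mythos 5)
Your approach matches the paper's in spirit: both reduce from Max3Sat, introduce three category-tagged copies of each variable, and enforce consistency through short two-literal clauses. The paper uses the slightly leaner \emph{cyclic} gadget $(\bar{x}_1\lor x_2),(\bar{x}_2\lor x_3),(\bar{x}_3\lor x_1)$ (three clauses) rather than three bidirectional equivalences (six clauses); this matters for the final constant, because the gap in the paper is computed as $\frac{(7/8)m+3\bar{n}}{m+3\bar{n}}$ with $\bar{n}$ the total number of literal occurrences, and maximizing over $\bar{n}\le 3m$ gives exactly $79/80$. If you add six consistency clauses per variable you land at a different constant, so as stated your route would not prove the lemma as written.

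The soundness concern you flag --- that a minority copy of a heavily-occurring variable can ``pay for itself'' by satisfying many original clauses while violating only one consistency clause --- is a real one, and it is precisely why the paper's count of added clauses is $3\bar{n}$ (proportional to the number of \emph{occurrences}, not to the number of variables). In effect the consistency cycle for a variable $x$ should be charged with weight proportional to the number of occurrences of $x$; then flipping a minority copy to the majority value costs $\Theta(\text{occ}(x))$ consistency clauses but gains at most $\text{occ}(x)$ original clauses, so one may assume WLOG that all copies agree, and the soundness bound $(7/8)m+3\bar{n}$ follows cleanly. You do raise this possibility (``replicating equivalence clauses to inflate their effective weight''), but you present it as a secondary option behind starting from bounded-occurrence Max3Sat. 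The latter is the weaker choice: H\aa stad's $7/8$ threshold is \emph{not} known to hold verbatim for Max3Sat with a fixed constant occurrence bound (the expander-based degree-reduction loses a little in the constant), so that route cannot reach $79/80$ without a correspondingly weaker starting threshold. The per-occurrence weighting, by contrast, needs nothing beyond plain Max3Sat and gives $79/80$ directly. So: right idea and right diagnosis of the delicate step, but you should make the replication the primary mechanism and carry the arithmetic with $3\bar{n}$ (and the three-clause cycle) to the end.
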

\begin{proof}
We reduce from {\sc Max3Sat}. In order to turn an instance of {\sc Max3Sat} to an instance of {\sc 3CatSat} it suffices to create three copies for every occurrence of a variable: for variable $x$ we create $x_1,x_2,x_3$ and include the clauses $(\overline{x}_1\lor x_2),(\overline{x}_2\lor x_3),(\overline{x}_3\lor x_1)$. This increases the number of clauses by at most $3\bar{n}$, where $\bar{n}$ is the number of literals in the formula. Given that {\sc Max3Sat} is hard to approximate better than 7/8, it follows immediately that {\sc 3CatSat} is hard to approximate better than $\frac{7/8m+3\bar{n}}{m+3\bar{n}}$. Noticing that $\bar{n}\leq3m$ we get the desired approximation factor as an upper bound on this expression by picking $\bar{n}=3m$.
\end{proof}

We start with some intuition about the reduction. The instance of {\sc 3Segments} we create has three types of segments: literal segments, clause segments and scaffolding segments.

\begin{itemize}
\item  {\em Literal segments} are used to model truth assignments on variables; they ensure that every variable is assigned exactly one of the two possible truth values and that this assignment is consistent across all appearances of literals of this particular variable.
\item  {\em Clause segments} model the truth assignment to literals of a particular clause; we create one such clause segment for every literal that appears in a clause and we make them intersect with the literal segments of those literals; the idea is that if the clause is satisfied we will be able to pick at least one clause segment per clause because the corresponding literal segment will not be picked. Moreover, we cannot pick two or more clause segments per clause, since they will all intersect with each other.
\item {\em Scaffolding segments}\footnote{Rather, scaffolding points, as these segments have zero length.} ensure that, for some points, there are literal or clause segments that extend only to one of the three possible directions: In particular, given some point $(x,y,z)$ with positive marginal profit contribution for more than one players (for example when $f(x,y,z)>0$ and $g(x,y,z)>0$) we may want to have only $x$ or only $y$-segments starting from $(x,y,z)$; scaffolding points make sure this is the case.
\end{itemize}

We show how to construct a probability distribution $\phi$ that serves as the input of {\sc 3Segments}, given an instance of {\sc 3CatSat}. In what follows we use $\hat{n}$ to denote $\max\{n_x,n_y,n_z\}$.  The support\footnote{The support is actually a subset of this; these are all the points (values of the players) with {\em potentially} non-zero probabilities. This will become clear in the actual construction.} of $\phi$ is:  $\left\{h(i)| i=1,\ldots,\hat{n}+2m+4\right\}^3$, for an appropriate choice of the values $h(i)$ which we will fix later; for now all we assume is that $h$ is an increasing function of $i$. The size of the support is at most $(\hat{n}+2m+4)^3$, so this is clearly a polynomial time construction.

We shall abuse notation and write $\phi(x,y,z)$ instead of $\phi\left(h(x),h(y),h(z)\right)$ when there is no ambiguity. We shall also refer to the sub-matrices $\phi(i,\cdot,\cdot),\phi(\cdot,j,\cdot),\phi(\cdot,\cdot,k)$ as the ``planes'' $x=i,y=j$ and $z=k$ respectively.

 The construction goes as follows: we start with an all-zero matrix $\phi$ of the above size. Consider an arbitrary ordering of the clauses $1$ through $m$. Suppose the $l$-th clause is of the form $(x_i\vee y_j \vee z_k)$ where $x_i,y_j$ and $z_k$ can be either positive or negative literals, and $i\,\,(\text{resp. }j,k)=1,\ldots,n_x\,\,(\text{resp. }n_y,n_z)$ (the same construction also works for clauses with less than 3 variables). For this clause we introduce the following literal segments (1,2,3) and clause segments (4). In what follows we will first set the probability mass of the point that is the apex of each segment, and will later show how to use scaffolding-points to ensure that there is only one segment starting at each such point, towards the appropriate direction.

\begin{enumerate}
\item If $pos(x_i)$  then $\phi\left(i+1,\hat{n}+2,\hat{n}+2+l\right) = c_1$ and $\phi\left(i+1,\hat{n}+2+m+l,\hat{n}+2\right) = c_1$ 

(These points are intended to be the apices of a $y$-segment and a $z$-segment respectively.)

else $\phi\left(i+1,\hat{n}+2+l,\hat{n}+2\right) = c_1$ and $\phi\left(i+1,\hat{n}+2,\hat{n}+2+m+l\right) = c_1$

(These points are intended to be the apices of a $z$-segment and a $y$-segment respectively.)

\item
If $pos(y_j)$  then  $\phi\left(\hat{n}+2+l,j+1,\hat{n}+2\right) = c_1$ and $\phi\left(\hat{n}+2,j+1,\hat{n}+2+m+l\right) = c_1$, 

(These points are intended to be the apices of a $z$-segment and an $x$-segment respectively.)

else $\phi\left(\hat{n}+2,j+1,\hat{n}+2+l\right) = c_1$ and $\phi\left(\hat{n}+2+m+l,j+1,\hat{n}+2\right) = c_1$

(These points are intended to be the apices of an $x$-segment and a $z$-segment respectively.)
\item If $pos(z_k)$  then $\phi\left(\hat{n}+2,\hat{n}+2+l,k+1\right) = c_1$ and $\phi\left(\hat{n}+2+m+l,\hat{n}+2,k+1\right) = c_1$, 

(These points are intended to be the apices of an $x$-segment and a $y$-segment respectively.)
else $\phi\left(\hat{n}+2+l,\hat{n}+2,k+1\right) = c_1$ and $\phi\left(\hat{n}+2,\hat{n}+2+m+l,k+1\right) = c_1$

(These points are intended to be the apices of a $y$-segment and an $x$-segment respectively.)
\item $\phi\left(1,\hat{n}+2+l,\hat{n}+2+l\right) = \phi\left(\hat{n}+2+l,1,\hat{n}+2+l\right)  = \phi\left(\hat{n}+2+l,\hat{n}+2+l,1\right) = c_2$

(These points are intended to be the apices of an $x$-segment, a $y$-segment and a $z$-segment respectively.)
\end{enumerate}

Every positive occurrence of a variable of type, say, $x$ results in the following two segments: a {\em positive} literal segment starting at $\left(i+1,\hat{n}+2,\hat{n}+2+l\right)$ that intersects with the corresponding clause segment starting at $\left(1,\hat{n}+2+l,\hat{n}+2+l\right)$, and a literal segment starting at $\left(i+1,\hat{n}+2+m+l,\hat{n}+2\right)$ that does not intersect with any clause segment; this is also called a {\em dummy-negative} literal segment. Negative occurrences of variables analogously result in {\em negative}  and {\em dummy-positive} literal segments. Dummy literal segments are introduced because --for reasons that will become apparent later-- we want to ensure that we have an equal number of positive and negative literal segments (when dummies are included).

The reduction relies on the fact that the {\bf only} intersections involving literal and clause segments will be between literal segments of literals that are negations of each other, between clause segments of the same clause and between clause segments and their corresponding literal segments. To ensure that we need the aforementioned scaffolding segments; these will ensure that the only segments starting from the points defined above as having probability masses $c_1$ and $c_2$ (henceforth called $c_1$ and $c_2$ points), are the desired literal and clause segments (in other words we want those $c_1$ and $c_2$ points to be the apices of {\em only} the segments mentioned above):

\begin{enumerate}
\item We first ensure that there is exactly one segment starting from every point $c_2$, which is perpendicular to the plane  $x,$(or $y, z$ ) $=1$; in other words we ensure that there are no other segments starting at $c_2$ that lie on the plane, by introducing:  $\phi\left(1,\hat{n}+l+2,\hat{n}+2m+3\right)= \phi\left(1,\hat{n}+2m+3,\hat{n}+l+2\right)=\phi\left(\hat{n}+l+2,1,\hat{n}+2m+3\right)=\phi\left(\hat{n}+2m+3,1,\hat{n}+l+2\right)=\phi\left(\hat{n}+l+2,\hat{n}+2m+3,1\right)=\phi\left(\hat{n}+2m+3,\hat{n}+l+2,1\right)=c_3,$ 
for all $l=1,\ldots,m$, with the requirement that:

$$h(\hat{n}+l+2)\cdot(c_2+c_3)<h(\hat{n}+2m+3)\cdot c_3,\,\,\forall l=1,\ldots,m$$

The above requirement forces the corresponding marginal profit contribution function to be negative at any $c_2$ point, for some appropriately chosen player-direction: For example, we want point $\left(1,\hat{n}+2+l,\hat{n}+2+l\right)$ to be the apex of an $x$-segment only, and no $y$ or $z$-segment should start at this point. We achieve this by including the points $\phi\left(1,\hat{n}+l+2,\hat{n}+2m+3\right)= \phi\left(1,\hat{n}+2m+3,\hat{n}+l+2\right)=c_3$; the above inequality then ensures that $$g\left(1,\hat{n}+2+l,\hat{n}+2+l\right)=h\left(1,\hat{n}+2+l,\hat{n}+2+l\right)=0$$ and therefore there are no $y$ or $z$-segments starting at this point.

\item We next ensure that there are no segments starting from $c_1$ that go along row or column $\hat{n}+2$, by introducing:  $\phi\left(i+1,\hat{n}+2,\hat{n}+2m+3\right) =\phi\left(i+1,\hat{n}+2m+3,\hat{n}+2\right) = \phi\left(\hat{n}+2,j+1,\hat{n}+2m+3\right) =\phi\left(\hat{n}+2m+3,j+1,\hat{n}+2\right) = \phi\left(\hat{n}+2,\hat{n}+2m+3,k+1\right) =\phi\left(\hat{n}+2m+3,\hat{n}+2,k+1\right) = c_4,$ for $i=1,\ldots,n_x, j=1,\ldots,n_y, k=1,\ldots,n_z$, with the requirement that:

$$h(\hat{n}+l+2)\cdot(2mc_1+c_4)<h(\hat{n}+2m+3)\cdot c_4,\,\,\forall l=1,\ldots,2m$$

because we can have at most $m$ occurrences of any variable, and hence at most $2m$ $c_1$-entries on any given level. The rationale behind the inequality above is the same as in case (1) above.

\item Finally, we ensure that there are no segments starting from a point $c_1$ on plane $x=i$ (resp. $y=i$, $z=i$), for $i=\hat{n}+3,\ldots,\hat{n}+2m+2$, perpendicular to the plane $x$ (resp. $y,z$), by introducing: $\phi\left(\hat{n}+2m+4,i,\hat{n}+2\right)=\phi\left(\hat{n}+2m+4,\hat{n}+2,i\right)=\phi\left(\hat{n}+2,\hat{n}+2m+4,i\right)=\phi\left(i,\hat{n}+2m+4,\hat{n}+2\right)=\phi\left(i,\hat{n}+2,\hat{n}+2m+4\right)=\phi\left(\hat{n}+2,i,\hat{n}+2m+4\right)=c_5,$ for all $i=\hat{n}+3,\ldots,\hat{n}+2m+2$, with the requirement (along the same lines as above) that:

$$ \hat{n}\cdot h(\hat{n}+1)\cdot c_1< h(\hat{n}+2m+4)\cdot c_5$$

\end{enumerate}

The following values for $h(\cdot)$ and the constants\footnote{In order to have a proper probability distribution these constants need to be normalized by their sum.} satisfy all of the constraints above:

\[ h(i) = \left\{ \begin{array}{ll}
         1 +\frac{i-1}{\hat{n}+2m+1} & \mbox{for $i = 1,\ldots,\hat{n}+2m+2$}\\
        4 & \mbox{for $i = \hat{n}+2m+3$}\\
        5 & \mbox{for $i = \hat{n}+2m+4$}\end{array} \right. \] 
        
and 
$$c_1=\frac{1}{\hat{n}^2m},\,\,\,\, c_2=1,\,\,\,\,c_3=1,\,\,\,\,c_4=\frac{3}{\hat{n}^2},\,\,\,\,c_5=\frac{2}{5\hat{n}m}$$
 
Our main result for this section is then the following:

\begin{lem}\label{thm:main_reduction}
It is NP-hard to approximate {\sc 3Segments} better than 0.05\%.
\end{lem}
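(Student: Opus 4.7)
The plan is to establish Lemma~\ref{thm:main_reduction} by showing that the construction just described is approximation-preserving with only a small multiplicative loss, so that the $79/80$-inapproximability of {\sc 3CatSat} proved in Lemma~\ref{lem:3catsat} translates into a $(1-\delta)$-inapproximability for {\sc 3Segments} with $\delta$ of order $5\times 10^{-4}$. I would proceed in three steps: a structural verification of the segments and their intersections, a closed-form expression for the {\sc 3Segments} optimum in terms of the number of satisfied clauses, and a quantitative transfer of the hardness gap.

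\emph{Structural verification.} First I would confirm the scaffolding does its job. The three inequalities on $h$ involving $c_3$, $c_4$, and $c_5$ are designed precisely so that at every $c_1$- and every $c_2$-apex the marginal profit contribution vanishes in all but the advertised direction, leaving exactly one literal or clause segment emanating from that apex. I would then verify the intersection pattern claimed by the construction: positive-side and negative-side literal segments of a single variable pairwise meet (so at most one side can be picked, which encodes a Boolean assignment); literal segments of different variables live on disjoint axis-parallel planes and are therefore pairwise disjoint; the three clause segments of clause $l$ all pass through the common point $(\hat n+2+l,\hat n+2+l,\hat n+2+l)$ and so at most one can be picked per clause; and the only literal--clause intersection is the intended one, where the real literal segment for $L$ in clause $l$ meets the $L$-slot clause segment of $l$.

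\emph{Closed form.} A direct computation then gives $w_L=h(\hat n+2)\,c_1=\Theta(1/(\hat n^{2}m))$ for each literal segment and $w_C=h(1)\,c_2=1$ for each clause segment, and the scaffolding contribution $K$ to any optimal solution is independent of the truth assignment. Using the intersection analysis together with a cleanup in the spirit of Lemma~\ref{lem:mpc-cont}, any feasible {\sc 3Segments} solution can be rewritten without loss of weight so as to (i) include exactly $\bar n=3m$ literal segments encoding some truth assignment and (ii) include one clause segment per clause whose slot is unblocked by the assignment. With the convention that picking a real literal segment corresponds to falsifying its occurrence, a clause-$l$ slot is unblocked iff the corresponding literal of $l$ is satisfied by the assignment; hence an assignment satisfying $s$ clauses produces a solution of value
\[
W(s)\;=\;K+\bar n\,w_L+s\,w_C,
\]
and the overall {\sc 3Segments} optimum equals $W(s^\ast)$ where $s^\ast$ is the MaxSat value of the underlying {\sc 3CatSat} instance.

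\emph{Transfer of the gap.} Linearity of $W$ implies that any $\rho$-approximation of {\sc 3Segments} yields, after cleanup, an assignment satisfying at least $\rho s^\ast - (1-\rho)(K+\bar n w_L)/w_C$ clauses. The parameter scales specified in the construction ($c_1=1/(\hat n^{2}m)$, $c_2=1$, $c_4=3/\hat n^{2}$, $c_5=2/(5\hat n m)$, and $h$ bounded in $[1,5]$) make $\bar n w_L=O(1/\hat n^{2})$ negligible and $K=\kappa m$ for an absolute constant $\kappa$ obtained by summing the $O(m+n)$ scaffolding-segment weights. Lemma~\ref{lem:3catsat} then forbids the number of satisfied clauses from exceeding $(79/80)m$ in polynomial time, forcing $\delta=1-\rho\geq 1/(80(\kappa+1))$, which on substituting the explicit value of $\kappa$ yields the $0.05\%$ inapproximability claimed. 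The main obstacle here is this final numerical accounting: because $\kappa$ is the ratio of the total scaffolding weight to $m w_C$ and is of the same order as the clause-segment contribution, the inapproximability threshold depends sensitively on $\kappa$, and verifying it rigorously requires enumerating every scaffolding segment's weight and every scaffolding--scaffolding and scaffolding--literal intersection to confirm that the cleanup procedure does not overcount.
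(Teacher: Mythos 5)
Your proposal is correct and follows essentially the same three-step plan as the paper: verify the intended intersection structure among literal, clause, and scaffolding segments; express the {\sc 3Segments} optimum as a fixed scaffolding-plus-literal baseline plus one unit per satisfied clause; and transfer the $79/80$ gap from {\sc 3CatSat}, with the same numerical bookkeeping (the scaffolding total is asymptotically $24m$, $\bar n w_L$ is negligible, and the threshold works out to $1-1999/2000 = 0.05\%$). The paper's proof is somewhat more concrete (it exhibits the two explicit expressions~\eqref{eq:cost1} and~\eqref{eq:cost2} and bounds them by $25m$ and $(24+\rho)m+\delta$), but it matches your $W(s)=K+\bar n w_L+s$ closed form exactly, including the $\delta\geq 1/(80(\kappa+1))$ arithmetic you derive.
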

\begin{proof}
We first note that regardless of the instance of {\sc3CatSat} we are reducing from, we can always obtain a fixed profit for {\sc 3Segments} from the scaffolding points; by picking the most profitable segments starting at each of the scaffolding points we get a total profit of:
$$F = 6m\cdot h(\hat{n}+2m+3) c_3 + 2n\cdot h(\hat{n}+2m+3) c_4+12m\cdot h(\hat{n}+2m+4) c_5$$

Let $\bar{n}$ be the total number of literal occurrences in the formula. We then have the following:

\begin{itemize}
\item If the {\sc 3CatSat} formula is satisfiable then the profit of {\sc 3Segments} is exactly: 

\begin{equation}\label{eq:cost1}
m\cdot h(1) c_2 + \bar{n}\cdot h(\hat{n}+2) c_1 + F
\end{equation}

To see this first consider the following way to pick the literal segments according to the truth values assigned to the corresponding variables: if a variable is set to {\bf true}\footnote{The other case is completely symmetrical.} we include its {\bf negative} literal segments (and the corresponding dummy-positive literal segments). Notice that --thanks to the dummy literal segments-- there is an equal number of positive and negative literal segments (dummies included), with totally $2\bar{n}$ of them; we include exactly half of them for every variable (either the positive or the negative ones), so the total profit from these segments is exactly $\bar{n}\cdot h(\hat{n}+2) c_1$. 

Moreover, since the formula is satisfiable, at least one literal per clause is satisfied; if this is a positive\footnotemark[\value{footnote}] literal, then the variable has been set to true so the literal segments for this variable included in our {\sc 3Segments} solution will be the negative ones. However since the variable appears as a positive literal at this clause, our construction ensures that the corresponding clause segment intersects only with the positive literal segment of this variable. Therefore, since the clause segment does not intersect with the negative literal segment (that we have already included in our {\sc 3Segments} solution), we can include the clause segment in our solution as well. Each one of these clause segments contributes $h(1) c_2$; noticing that we cannot include more than one clause segment from each clause (because they intersect) we get that their total contribution is exactly $m\cdot h(1) c_2$.

\item If the optimal assignment for {\sc 3CatSat} satisfies at most $\rho$ of the clauses, then the optimal profit for {\sc 3Segments} is at most:

\begin{equation}\label{eq:cost2}
\rho m\cdot h(1) c_2 + \bar{n}\cdot h(\hat{n}+2) c_1 + F
\end{equation}

To prove this we show how to transform a solution of profit $\geq \rho m\cdot h(1) c_2 + \bar{n}\cdot h(\hat{n}+2) c_1 + F$ to a truth assignment that satisfies more than $\rho$ of the clauses. First notice that in order to achieve such a profit we must include exactly $\bar{n}$ literal segments and exactly one clause segment per clause for $\rho$ of the clauses: we cannot include more without having intersecting segments and we cannot include any less and achieve the same profit. These $\bar{n}$ literal segments correspond to a truth value assignment to the variables of the formula as described above: we set every variable whose positive (resp. negative) literal segments are included to false (resp. true). The claim follows by noticing that this truth value assignment satisfies every clause for which a clause segment was included in the {\sc 3Segments} solution, i.e. for a fraction $\rho$ of the clauses.
\end{itemize}
From the way we set the constants and $h(\cdot)$, it follows that expression (\ref{eq:cost1}) is $\geq 25m$ and expression (\ref{eq:cost2}) is $\leq (24+\rho)m + \delta$ for some constant $\delta \ll 0.1$, for sufficiently large $m$ and $n$. Using the fact from Lemma \ref{lem:3catsat} that $\rho=79/80$ we get an approximation ratio of 1999/2000.
\end{proof}

\begin{figure}[H]
\centering
\includegraphics[scale = 0.4]{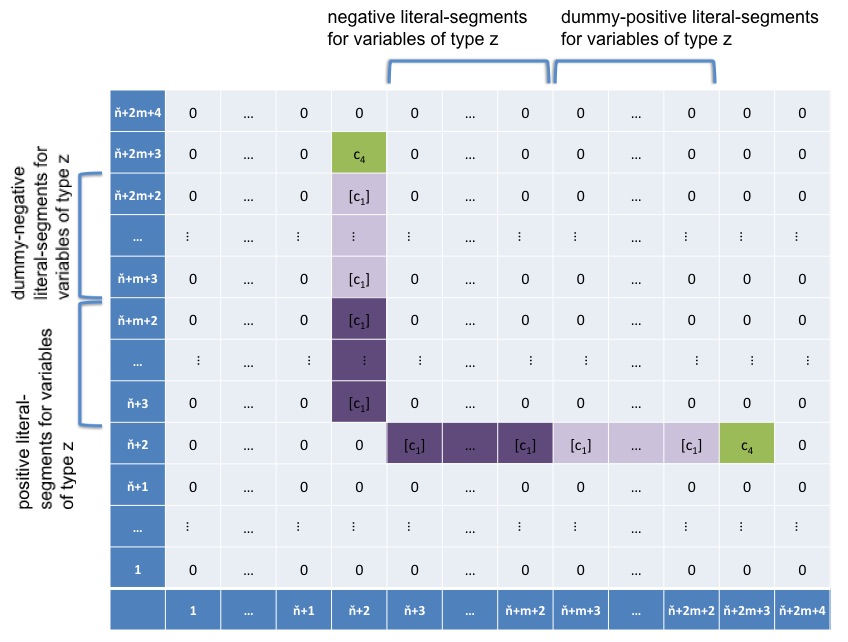}
    \caption{\small The $xy$-plane for $z = k+1,\,k=1,\ldots,n_z$, contains the literal segments of variable $z_k$. By $[\cdot]$ we mean that this point does not appear at all levels $z$. Every such level must have an equal number of $c_1$-entries in line $\hat{n}+1$ and column $\hat{n}+1$; also notice that it is not possible to simultaneously have non-zero entries both for row and column $i$ at the same level $z$. }
\label{figure4}
\end{figure}

\begin{figure}[H]
\centering
\includegraphics[scale = 0.4]{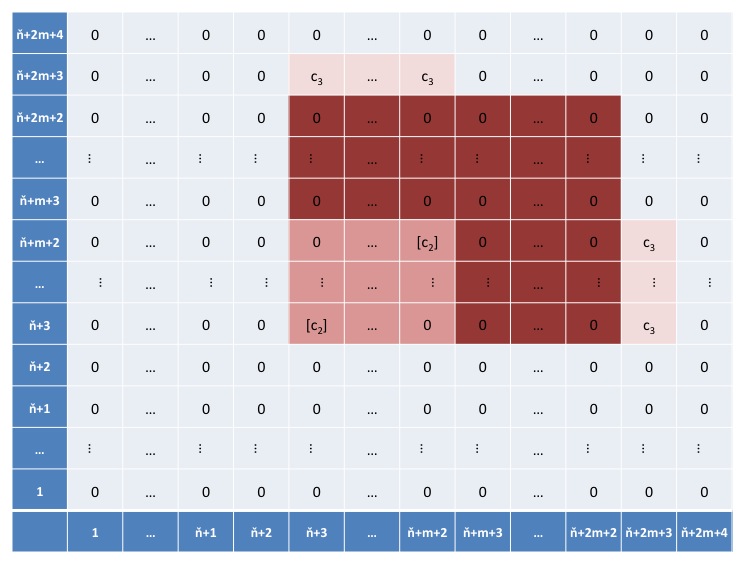}
  \caption{\small The $xy$-plane for $z=1$. The exact position of every $c_2$ depends on the clause in which the variable appears: every time a variable $z_k$ appears in clause $l$, we introduce two literal segments lying on level $z=k+1$, through $c_1$-points, and a clause segment perpendicular to this plane, through a $c_2$-point, so that it intersects with the non-dummy literal segment.}
\label{figure5}
\end{figure}  

\begin{figure}[H]
\centering
\includegraphics[scale = 0.4]{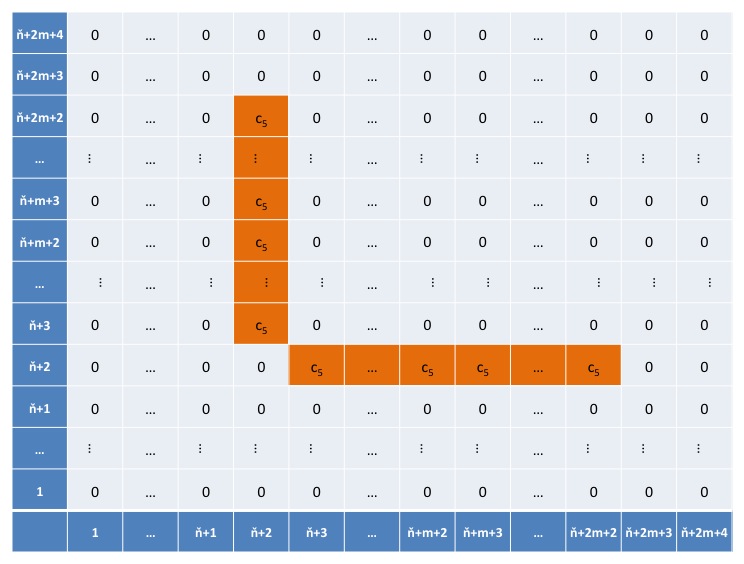}
  \caption{\small The $xy$-plane for $z=\hat{n}+2m+4$.}
  \label{figure6}
\end{figure}  

\begin{figure}[H]
\centering
\includegraphics[scale = 0.5]{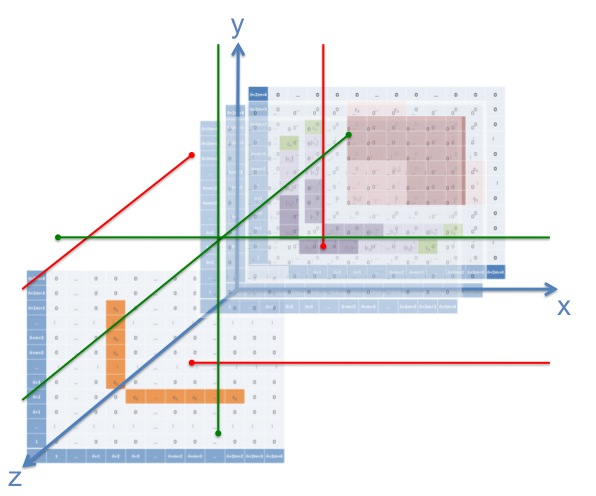}
  \caption{\small The construction in three dimensions. The red lines depict literal segments and the green lines clause segments as they are positioned in the space.}
  \label{figure6}
\end{figure}  

Combining Lemmas~\ref{lem:MWIS23S} and~\ref{thm:main_reduction} immediately yields our main theorem for this section:

\begin{thm}
It is NP-hard to approximate {\sc 3OptimalAuctionDesign} better than 0.05\%.
\end{thm}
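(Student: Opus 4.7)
The plan is to simply chain the two preceding lemmas. By Lemma \ref{thm:main_reduction} there is a polynomial-time reduction from {\sc 3CatSat} to {\sc 3Segments} which, under the promise that {\sc 3CatSat} is satisfiable vs.\ at most a $79/80$-fraction of clauses can be satisfied, produces {\sc 3Segments} instances whose optimum values differ by a factor of at least $2000/1999$. By Lemma \ref{lem:MWIS23S}, {\sc 3Segments} and {\sc 3OptimalAuctionDesign} are related by approximation-preserving reductions: the proof sketch there matches $x$-, $y$-, $z$-segments in the segment instance with entries of the allocation matrix $A$ labelled $1$, $2$, $3$ respectively, so that the total weight of a chosen family of non-intersecting segments equals (up to the scaffolding constants) the expected revenue of the corresponding IC/IR deterministic auction.

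First I would verify explicitly that the composition of the two reductions is approximation-preserving in the strong sense needed here: that is, that for the distribution $\phi$ constructed in Section~\ref{app:construction}, the optimal auction revenue equals the weight of a maximum non-intersecting family of segments in the instance of {\sc 3Segments} defined on $\mathrm{Sup}(\phi)$. This follows because the marginal profit contributions $f,g,h$ used to define segment weights in Definition \ref{defn:segments} are exactly the marginal contributions appearing in the three-dimensional analogue of Lemma \ref{lem:equiv}, and the non-intersection constraint on segments is precisely the monotonicity/non-crossing constraint that rules out allocating the same valuation tuple to two different bidders.

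Next I would combine the bounds. Writing $\mathrm{OPT}_S$ for the optimal {\sc 3Segments} value and $\mathrm{OPT}_A$ for the optimal auction revenue, the equivalence of Lemma \ref{lem:MWIS23S} gives $\mathrm{OPT}_A=\mathrm{OPT}_S$. Hence any polynomial-time algorithm approximating {\sc 3OptimalAuctionDesign} within a factor better than $1999/2000$ would, via the reduction of Lemma \ref{lem:MWIS23S}, yield a polynomial-time algorithm approximating {\sc 3Segments} within the same factor, contradicting Lemma \ref{thm:main_reduction}. Since $1-1999/2000=1/2000=0.05\%$, this yields the claimed $0.05\%$ inapproximability threshold.

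The only subtle point, and the one I would spend the most care on, is confirming that the reduction of Lemma \ref{lem:MWIS23S} truly preserves the additive/multiplicative gap produced in Lemma \ref{thm:main_reduction}: because the fixed scaffolding contribution $F$ appears on both sides of the inequalities (\ref{eq:cost1}) and (\ref{eq:cost2}), the ratio bound $1999/2000$ derived there already includes these constants, so the same ratio carries over verbatim to the auction revenue. With this observation the theorem follows immediately from Lemmas \ref{lem:MWIS23S} and \ref{thm:main_reduction}.
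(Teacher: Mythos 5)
Your proposal is correct and takes essentially the same route as the paper: the paper's own proof is the one-line statement that combining Lemma~\ref{lem:MWIS23S} (the approximation-preserving equivalence of {\sc 3Segments} and {\sc 3OptimalAuctionDesign}) with Lemma~\ref{thm:main_reduction} (the $1999/2000$ inapproximability of {\sc 3Segments}) yields the theorem. You supply the same chaining argument, plus some useful sanity checks on why the gap transfers verbatim.
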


\section{Discussion and Open Problems}
Even though in this paper we focused on deterministic mechanisms, our geometric characterization has interesting consequences for randomized mechanisms. Remember that for the discrete case the optimal (deterministic) mechanism immediately follows from solving the integer program of Problem C in Section~\ref{proof_main}, i.e. computing a maximum weight independent set in the corresponding $n$-partite graph. Our first observation is that the linear programming relaxation of this integer program corresponds to computing the optimal randomized mechanism. For two players, where the graph is bipartite and the integer program is totally unimodular, the optimum integer solution is also the optimum of the relaxed linear program. Therefore, for two bidders, the program of Problem C computes a deterministic mechanism that is optimal among all randomized mechanisms: this is reminiscent of Myerson's original result, where the deterministic mechanism obtained is optimal for the (larger) class of {\em Bayesian truthful} randomized mechanisms~\cite{RePEc:nwu:cmsems:362}. For a constant number of three or more bidders, the generalization of our geometric characterization yields polynomial-time algorithms for computing the optimal randomized auction, in sharp contrast with the intractability of computing the optimal deterministic auction, even for three bidders; of course for a large number of bidders, the size of this linear program may become exponentially large and therefore this approach is infeasible. For an alternative linear program that computes the optimal randomized auction for any number of bidders, when the distribution is given explicitly, the reader is referred to~\cite{DBLP:journals/corr/abs-1011-2413}.

An important open problem of this work is to close the gap between the best approximation algorithm known for the optimal auction problem (currently .60)~\cite{DBLP:journals/corr/abs-1011-2413} and the inapproximability bound (currently about $.9995$).  We believe that progress there is attainable.  Interestingly, our work implies an approximation of $2/n$ for $n$ players: before having the bidders announce their bids, the auctioneer looks at their joint distribution and privately runs the optimal auction for all possible pairs of players. Since solving for the optimal auction is nothing but a maximum weight independent set problem on the corresponding graph, it is easy to prove that the profit of the best of those ${n\choose2}$ auctions is at least $2/n$ of the overall profit. The auctioneer then rejects a priori all but the bidders who were part of the most profitable two-bidder auction and then runs it. The overall auction is obviously truthful as long as bidders are rejected before even submitting their bids. For 3 bidders this gives an approximation ratio of 2/3, improving over Ronen's auction~\cite{DBLP:conf/sigecom/Ronen01}, but for $n\geq4$ the approximation ratio drops below 1/2.  
\bibliographystyle{plain}
\bibliography{refs}

\end{document}